\documentclass[9pt,shortpaper,twoside,web]{ieeecolor}
\usepackage{generic}
\usepackage{cite}
\usepackage{amsmath,amssymb,amsfonts}
\usepackage{algorithmic}
\usepackage{graphicx}
\usepackage{textcomp}
\usepackage{url}

\usepackage{amsthm}

\def\BibTeX{{\rm B\kern-.05em{\sc i\kern-.025em b}\kern-.08em
    T\kern-.1667em\lower.7ex\hbox{E}\kern-.125emX}}
\theoremstyle{plain}
\newtheorem{theorem}{Theorem}

\newtheorem{proposition}{Proposition}
\theoremstyle{proposition}
\theoremstyle{assumption}
\newtheorem{definition}{Definition}

\newtheorem{remark}{Remark}
\theoremstyle{goal}

\theoremstyle{assumption}
\newtheorem{assumption}{Assumption}

\begin{document}
\title{Estimation of Unknown Parameters in Presence of Perturbations and Noises with Application to GPEBO Design}
\author{Anton Glushchenko, \IEEEmembership{Member, IEEE}, Konstantin Lastochkin
\thanks{A. I. Glushchenko is with V.A. Trapeznikov Institute of Control Sciences RAS, Moscow, Russia (phone: +79102266946; e-mail: aiglush@ipu.ru).}
\thanks{K. A. Lastochkin is with V.A. Trapeznikov Institute of Control Sciences RAS, Moscow, Russia (e-mail: lastconst@ipu.ru).}}

\maketitle

\begin{abstract}
A problem of online estimation of unknown parameters is considered for a linear regression equation, which is affected by an additive perturbation that can be caused by measurement noise (that corrupts regressor and regressand), as well as external perturbations. Known approaches to solve this problem typically have one of the following disadvantages: 1) they ensure convergence of a parametric error to a compact set with non-adjustable bound, 2) independence of all system regressor elements from the perturbation/noise is required to annihilate {them}, 3)~an instrumental variable is needed to be selected. On the basis of the novel perturbation annihilation procedure, in the present paper, we propose three new estimation laws, which are free from the above-mentioned drawbacks and ensure exponential convergence of the parametric error to an arbitrarily small neighborhood of zero, particularly, in case more than a half (not all) of the regressor elements are independent from additive perturbation. One of the proposed estimation laws is used for the design of Generalized Parameter Estimation-Based Observer (GPEBO) for nonlinear affine systems to enhance GPEBO performance in case when the measured system output is corrupted by noise. The theoretical results are supported by examples and mathematical modelling.
\end{abstract}

\section{Introduction}
\label{sec:introduction}
It is well-known that a large number of challenging control problems with the system/environment uncertainties can be reduced to a task of unknown parameters online estimation of a simple Linear Regression Equation (LRE) \cite{1, 2, 3, 4, 5, 6}.

Therefore, in recent years, many novel parameter estimators have been proposed, which enhance the estimation performance. P. Lion and G. Kreisselmeier proposed the Dynamic and Memory Regressor Extension schemes \cite{1, 8}, which relax the persistent excitation condition and allow one to increase the rate of parametric convergence arbitrarily via making the learning gain of the estimation law higher. G. Chowdhary \cite{9} proposed the concurrent learning estimation law, which uses past and current data concurrently to relax the persistence of excitation condition and provides parameters convergence under Finite Excitation (FE) assumption. Then, being inspired by \cite{9}, many researchers proposed novel data-based and integral-like regressor extension schemes \cite{6, 7, 10} to relax the persistent excitation condition without data stacking. S. Aranovskiy et al. \cite{11} proposed the Dynamic Regressor Extension and Mixing (DREM) procedure, which provides element-wise monotonous transient response for each parametric error and relaxes persistence of excitation requirement. R. Marino and P. Tomei proposed in \cite{12} and M. Broucke improved in \cite{13} the so-called  $\mu$-modification, which uses the excitation on the subspaces providing the exponential stability of the well-known gradient descent estimator without persistence of excitation condition. Being inspired by the heavy-ball method from optimization, A. Annaswamy et al. \cite{14} proposed the novel high-order tuners, which ensure accelerated rate of convergence and provide arbitrary increase of the parametric convergence rate via making the learning gain of the estimation law higher. Exhaustive overviews of the results in the direction of performance enhancement of parameters online estimation and persistent excitation condition relaxation are presented in \cite{15, 16, 17}.

 In modern literature, significantly less attention is paid to the task of parameters online estimation in the presence of perturbations and noises. However, considering practical scenarios, external perturbations always come along with the motion of the system, and the measurement noise also always corrupts available measurements of the system state and, therefore, the LRE is always affected by the additive perturbation.

Typically, in the papers related to the unknown parameters online estimation, the robustness with respect to the perturbations and noises has been proved in the sense of Uniform Ultimate Boundedness (UUB) (for example, see Theorem 2 from \cite{6}, Proposition 4 from \cite{7}, Lemma 3.1 from \cite{18} and Theorem 4 from \cite{19}). UUB is a very weak property for practical applications, as the obtained ultimate upper bound is almost always a simple ratio of the perturbation and excitation magnitudes and cannot be reduced arbitrarily by choice of estimation law tuning gains. From the practical point of view, such property means that the designer/user cannot mitigate the effect of the measurement noises and perturbations on the obtained unknown parameters estimate, which, consequently, can be arbitrarily bad, especially if the low order excitation scenario is considered.

To overcome such a drawback, in literature, the estimation laws have been proposed \cite{Polyak, Granichin, 20, 21, 22}, which in the presence of perturbations and measurement noises provide convergence of the parameter estimation error to a) zero, b) a compact set, which can be reduced arbitrarily via increase of some tuning gains of the estimator. Let us briefly consider some existing estimation laws with such properties. {The problem of estimation of LRE parameters with arbitrary (in some sense) noise was considered in \cite{Polyak, Granichin}. The properties of such approaches are proved for LRE with random regressors, perturbations and in the case when some {\emph{independence}} assumptions hold. Particularly, in the stationary and deterministic case the convergence conditions from \cite{Polyak} are violated when the regressor and perturbation contain at least one common frequency, and convergence conditions from \cite{Granichin} are violated if the regressor and perturbation contain common spectral line at zero frequency.} In \cite{20} the power noise filtration method is proposed for the DREM procedure, which provides the power noise mitigation in the scalar regression equations and convergence of the parameter estimation error to a compact set with arbitrarily small bound if the following conditions are met: {\it i}) the noise is independent from the scalar regressor and {\it ii}) the noise magnitude is significantly lower in comparison with the scalar regressor one. In \cite{21}, for the scalar regression equations obtained by the DREM procedure, the estimation law with averaging is proposed, which ensures asymptotic exact estimation of the unknown parameters if the scalar regressor is independent from the external perturbation. In \cite{22}, the requirement of independence of the regressor from the perturbation is relaxed via the novel Instrumental Variable based DREM procedure. However, the estimator from \cite{22} requires to choose the Instrumental Variable properly, and therefore, it is {\it bona fide} applicable only for estimation of linear systems parameters and in case the input of such system is independent from measurement noise and external perturbations. In \cite{23}, some analytical and numerical comparison of the above-mentioned estimation laws, which provide parametric error convergence to zero or to arbitrarily small compact set, is presented. 

In the present paper, we propose novel parameter estimators for online estimation of unknown parameters in the presence of perturbations and noises, which have the following distinguishing features and advantages over the results \cite{Polyak, Granichin, 20, 21, 22}.
\begin{enumerate}
\item[\textbf{DF1.}] If sufficiently large number of regressor elements are independent from the perturbation that affects the LRE, and some excitation requirements are met, then the proposed estimation laws ensure parametric error convergence to some compact set, which size can be arbitrarily reduced via the choice of tuning gain of the estimator. This is a main novelty and benefit of the proposed estimators in comparison with many existing ones.
\item[\textbf{DF2.}] In comparison with the results of \cite{Polyak, Granichin, 20, 21}, to achieve the parametric error convergence to an arbitrarily small set, the proposed estimation laws require not all elements of the regressor to be independent from the perturbation, but only a sufficiently large number of such elements. 
\item[\textbf{DF3.}] In comparison with the results of \cite{22}, the proposed estimation laws do not require the Instrumental Variable selection and, therefore, they are applicable not only to LRE parameterizations of the linear systems, but to the general LRE.
\end{enumerate}

The Generalized Parameter Estimation-Based Observer (GPEBO) is the system state reconstruction method, which solves the observation task via the system initial condition estimation \cite{3, 4}.  Recently, it has been reported \cite{4} that the GPEBO with G+D interlaced estimator of the system initial conditions  ensures globally exponentially stable observation of the system state iff the system is observable. However, in the presence of measurement noise, the GPEBO augmented with G+D estimator provides only the convergence of both initial conditions and state estimation errors to some residual sets with bound, which cannot be reduced arbitrarily. In the present paper, we augment GPEBO with the novel estimation law to achieve exponential convergence of the state estimation error to an arbitrarily small compact set.

The preliminary results of the paper have been presented at ALCOS-2025 \cite{24} and ECC-2025 \cite{25}. In comparison with such conference papers, in this article we give some unpublished theoretical justifications and provide some auxiliary insights into the properties of proposed estimation law. Moreover, in this paper, we present some ideas from \cite{24, 25} in a clearer manner and, for the first time, show the application of the proposed estimation law to the design of GPEBO for nonlinear affine systems. 

The paper has a standard linear structure, but in order to achieve the space limitations of technical note format, all proofs and auxiliary simulation results are postponed to Supplementary Material \cite{27} with an open access option.

\subsection{Preliminaries}

The following notation is used throughout the paper: $t \ge t_{0} \ge 0$ stands for the time, $f: \left[t_0, \infty \right) \mapsto \mathbb{R}^n$ denotes a function of time, \emph{i.e.}, a continuous-time signal, $f\left(t\right)$ is a value of the signal $f$ at the time instant $t$, $\left| \cdot \right|$ is the absolute value, $\left\| \cdot \right\|$ is the suitable norm of $(\cdot)$, {${\rm{j}}$ denotes an imaginary unit,} ${I_{n \times n}}=I_{n}$ is an identity $n \times n$ matrix, ${0_{n \times n}}$ is a zero $n \times n$ matrix, $0_{n}$ stands for a zero vector of length $n$, ${\rm{det}}\{\cdot\}$ stands for a matrix determinant, ${\rm{adj}}\{\cdot\}$ represents an adjoint matrix, for a symmetric matrix $A\in\mathbb{R}^{n\times n}$ we write $A>0$ if all eigenvalues of $A$ are strictly greater then zero.

Next, we provide a listing of auxiliary definitions and propositions \cite{27}, which will be used axiomatically throughout the paper. {In the below given definitions and propositions, we assume that {\emph{i}}) all integrals and limits exist for all $t \in \left[ {{t_0}{\rm{,\;}}\infty } \right)$ and $h\in\mathbb{R}$, {\emph{ii}}) all signals are extended by zero value for $t\in\left(-\infty, t_0\right)$.}

\begin{definition}
    A signal $\varphi {\rm{:\;}}\left[ {{t_0}{\rm{,\;}}\infty } \right) \mapsto {{\mathbb{R}} ^n}$ is called stationary $\left( {\varphi  \in {\rm{ST}}} \right)$ if there exists $\Lambda=\Lambda^{\top}\ge0$ such that:
\begin{equation*}{
\mathop {{\rm{lim}}}\limits_{T \to \infty } \mathop {{\rm{sup}}}\limits_{t \ge t_0} \left\| {\frac{1}{T}\int\limits_t^{t + T} {\varphi \left( \tau  \right){\varphi ^ \top }\left( \tau  \right)} d\tau  - \Lambda } \right\| = 0.}
\end{equation*}
\end{definition}

\begin{definition} A regressor $\varphi {\rm{:\;}}\left[ {{t_0}{\rm{,\;}}\infty } \right) \mapsto {{\mathbb{R}} ^n}$ is persistently exciting $\left( {\varphi  \in {\rm{PE}}} \right)$ if there exist $\overline{\alpha}  \ge \underline{\alpha} > 0{\rm{,\;}}T_{\rm{PE}} > 0$ such that the following inequalities hold for all $t \ge {t_0}$:
\begin{equation*}
\overline{\alpha} {I_n} \ge {\frac{1}{T_{\rm{PE}}}}\int\limits_t^{t + T_{\rm{PE}}} {\varphi \left( \tau  \right){\varphi ^{\top}}\left( \tau  \right)} d\tau  \ge \underline{\alpha}{I_n}.
\end{equation*}
\end{definition}

\begin{definition} Signals $f{\rm{:\;}}\left[ {{t_0}{\rm{,\;}}\infty } \right) \mapsto {\mathbb{R}^n}$, $g{\rm{:\;}}\left[ {{t_0}{\rm{,\;}}\infty } \right) \mapsto {\mathbb{R}^m}$ are called independent $\left( {f \bot g} \right)$ if for all {$h\in\mathbb{R}$} it holds that: 
\begin{equation*}{
\mathop {{\rm{lim}}}\limits_{T \to \infty } \mathop {{\rm{sup}}}\limits_{t \geqslant {t_0}} \left\| {\frac{1}{T}\int\limits_t^{t + T} {f\left( \tau  \right){g^ \top }\left( {\tau  + h} \right)} d\tau} \right\|=0.}
\end{equation*}
\end{definition}

\begin{definition} Signals $f{\rm{:\;}}\left[ {{t_0}{\rm{,\;}}\infty } \right) \mapsto {\mathbb{R}^n}$, $g{\rm{:\;}}\left[ {{t_0}{\rm{,\;}}\infty } \right) \mapsto {\mathbb{R}^m}$ are called dependent $\left( {f\parallel g} \right)$ if there exists {a continuous function ${C_W} {\rm{:\;}}\mathbb{R} \mapsto  {\mathbb{R}^{n \times m}}$ and $h^*\!\in\!\mathbb{R}$ such that $C_W\left(h^*\right)\ne 0_{n\times m}$} and for all {$h\in\mathbb{R}$} it holds that: 
\begin{equation*}{
\mathop {{\rm{lim}}}\limits_{T \to \infty } \mathop {{\rm{sup}}}\limits_{t \geqslant {t_0}} \left\| {\frac{1}{T}\int\limits_t^{t + T} {f\left( \tau  \right){g^ \top }\left( {\tau  + h} \right)} d\tau  - {C_W}\left( h \right)} \right\|=0.}
\end{equation*}
\end{definition}

 \begin{proposition}
{If $\varphi  \in {\rm{ST}}$ {and $f \bot g \left(f\parallel g\right)$}, then it holds that}
\begin{equation*}{
\begin{gathered}
\mathop {{\rm{lim}}}\limits_{T \to \infty } \mathop {{\rm{sup}}}\limits_{t \geqslant {t_0 + T}} \left\| {\frac{1}{T}\int\limits_{t-T}^{t} {\varphi \left( \tau  \right){\varphi ^ \top }\left( \tau  \right)} d\tau  - \Lambda } \right\| = 0,\\
\mathop {{\rm{lim}}}\limits_{T \to \infty } \mathop {{\rm{sup}}}\limits_{t \geqslant {t_0 + T}} \left\| {\frac{1}{T}\int\limits_{t-T}^{t} {f\left( \tau  \right){g^ \top }\left( {\tau  + h} \right)} d\tau  - {C_W}\left( h \right)} \right\|=0,
\end{gathered}}
\end{equation*}
{where, for $f \bot g$, it is assumed that $C_W\left(h\right)=0$ for all $h\in\mathbb{R}$.}
\end{proposition}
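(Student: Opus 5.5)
The plan is to reduce the backward-window statements to the forward-window definitions by a change of variables. Both claims have the same structure, so I will treat the generic case. Let $F(\tau)$ stand for either $\varphi(\tau)\varphi^{\top}(\tau)$ with limit $L=\Lambda$, or $f(\tau)g^{\top}(\tau+h)$ for a fixed $h\in\mathbb{R}$ with limit $L=C_W(h)$, where $C_W\equiv 0$ when $f\bot g$. The definitions of $\varphi\in{\rm ST}$, $f\bot g$ and $f\parallel g$ all say the same thing:
\begin{equation*}
\varepsilon(T):=\sup_{s\ge t_0}\left\|\frac{1}{T}\int\limits_{s}^{s+T}F(\tau)d\tau-L\right\|\to 0 \quad \text{as } T\to\infty .
\end{equation*}

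The key step is the substitution $s=t-T$. For any $t\ge t_0+T$ we have $s\ge t_0$ and
\begin{equation*}
\frac{1}{T}\int\limits_{t-T}^{t}F(\tau)d\tau=\frac{1}{T}\int\limits_{s}^{s+T}F(\tau)d\tau .
\end{equation*}
The map $t\mapsto t-T$ is a bijection from $[t_0+T,\infty)$ onto $[t_0,\infty)$. Hence
\begin{equation*}
\sup_{t\ge t_0+T}\left\|\frac{1}{T}\int\limits_{t-T}^{t}F(\tau)d\tau-L\right\|=\varepsilon(T),
\end{equation*}
with equality rather than just an inequality. Letting $T\to\infty$ gives both claimed limits at once: the first with $F=\varphi\varphi^{\top}$, and the second for each fixed $h$.

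The supremum is restricted to $t\ge t_0+T$, so the window $[t-T,t]$ never reaches into $(-\infty,t_0)$. The zero-extension convention therefore does not matter for $\varphi$ or $f$, and the shifted argument $\tau+h$ in $g$ is handled exactly as in the definitions. The only point needing care is that this convention is applied identically on both sides of the identity, which it is, since the integrands coincide pointwise.

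I expect no real obstacle. The one subtlety is that the supremum set depends on $T$. This is harmless because the reindexing maps it exactly onto the fixed set $\{s\ge t_0\}$ used in the definitions, and every quantity involved exists by the standing assumption \emph{i}).
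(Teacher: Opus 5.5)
Your proof is correct and is essentially the paper's argument: the reindexing $s=t-T$ maps $\{t\ge t_0+T\}$ bijectively onto $\{s\ge t_0\}$ and turns each backward-window average into the forward-window average of Definitions 1, 3 and 4, so the two suprema coincide and both limits follow directly. You also correctly note that the restriction $t\ge t_0+T$ keeps the window inside $[t_0,\infty)$. The zero extension can then enter only through $g(\tau+h)$, and it does so identically on both sides.
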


\begin{proposition} {If $\varphi \in {\rm{ST}} \cap {\rm{PE}}$, then $\Lambda  > 0$.}
\end{proposition}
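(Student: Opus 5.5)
Proposition 2 follows by passing to the limit in the PE lower bound, using the window-uniform convergence in the definition of stationarity. The idea is to average over windows whose length is a multiple of $T_{\rm PE}$.

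First, fix $k\in\mathbb{N}$ and set $T=kT_{\rm PE}$. For any $t\ge t_0$, split $[t,t+kT_{\rm PE}]$ into $k$ consecutive subintervals of length $T_{\rm PE}$. Applying the PE lower bound on each subinterval and summing gives
\begin{equation*}
\frac{1}{kT_{\rm PE}}\int\limits_t^{t+kT_{\rm PE}}\varphi(\tau)\varphi^{\top}(\tau)d\tau=\frac{1}{k}\sum\limits_{i=0}^{k-1}\frac{1}{T_{\rm PE}}\int\limits_{t+iT_{\rm PE}}^{t+(i+1)T_{\rm PE}}\varphi(\tau)\varphi^{\top}(\tau)d\tau\ge\underline{\alpha}I_n .
\end{equation*}
This holds uniformly in $t\ge t_0$ and in $k$.

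Second, we pass to the limit. Because $\varphi\in{\rm ST}$, the quantity
\begin{equation*}
\varepsilon_k:=\sup\limits_{t\ge t_0}\left\|\frac{1}{kT_{\rm PE}}\int\limits_t^{t+kT_{\rm PE}}\varphi\varphi^{\top}d\tau-\Lambda\right\|
\end{equation*}
tends to zero as $k\to\infty$. The sequence $T_k=kT_{\rm PE}\to\infty$ is a subsequence of the limit in the definition, so this is immediate.

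Now take any unit vector $x\in\mathbb{R}^n$ and, for instance, $t=t_0$. By the first step, $x^{\top}\Lambda x\ge x^{\top}M_kx-\|M_k-\Lambda\|\ge\underline{\alpha}-\varepsilon_k$, where $M_k$ is the averaged matrix above; here the spectral norm (or any norm, up to an equivalence constant) bounds the quadratic-form deviation. Letting $k\to\infty$ gives $x^{\top}\Lambda x\ge\underline{\alpha}>0$ for all unit $x$. Since $\Lambda=\Lambda^{\top}$, every eigenvalue is at least $\underline{\alpha}$, so $\Lambda>0$.

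The argument has no real obstacle. The only point needing care is the norm in the definition of ST: if it is not the spectral norm, the difference $x^{\top}(M_k-\Lambda)x$ is still bounded by $c\|M_k-\Lambda\|$ for some constant $c$ by norm equivalence, which still tends to zero.
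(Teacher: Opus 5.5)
Your proof is correct. Splitting a window of length $kT_{\rm PE}$ into $k$ PE windows keeps the averaged Gram matrix bounded below by $\underline{\alpha}I_n$, uniformly in $t$ and $k$; the uniform convergence in the ST definition along $T_k=kT_{\rm PE}$ then gives $\Lambda\ge\underline{\alpha}I_n>0$. The paper only states Proposition 2 and defers the proof to its supplementary material, so I cannot compare against it directly. This window-splitting and limit-passing argument is the standard route for this fact, and your norm-equivalence remark correctly handles the only technical point.
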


\begin{proposition} Let $f{\rm{:\;}}\left[ {{t_0}{\rm{,\;}}\infty } \right) \mapsto \mathbb{R}$ and $g{\rm{:\;}}\left[ {{t_0}{\rm{,\;}}\infty } \right) \mapsto \mathbb{R}$ be defined as follows {(here $A_i^f{\text{,\;}}A_i^g > 0$ and $\omega _i^f \ne \omega _j^f{\text{, }}\omega _i^g \ne \omega _j^g$ for all $i \ne j$)}:
\begin{equation*}
f\left( t \right) = \sum\limits_{i = 1}^N {A_i^f{\rm{sin}}\left( {\omega _i^ft + \phi _i^f} \right)} {\rm{,\;}}g\left( t \right) = \sum\limits_{i = 1}^N {A_i^g{\rm{sin}}\Bigl( {\omega _i^gt + \phi _i^g} \Bigr)} {\rm{,}}
\end{equation*}
then, if  $\omega _i^f \ne \omega _j^g$ for all $i{\rm{,\;}}j \in \left\{ {1, \ldots {\rm{,}}N} \right\}$ {$\left(\omega _i^f = \omega _j^g\right.$ for some $i{\rm{,\;}}j \in \left\{ {1, \ldots {\rm{,}}N} \right\}\Bigr)$, then:}
\begin{enumerate}
\item[1)]$f \bot g \left( {f\parallel g} \right){\rm{,}}$ 
\item[2)]${f_f} \bot g \left( {{f_f}\parallel g} \right){\rm{,}}$
\item[3)]${f_f} \bot {g_f} \left( {{f_f}\parallel {g_f}} \right){\rm{,}}$ 
\end{enumerate}
where ${f_f}\left( t \right){\rm{:}} = W\left( s \right)\left[ f \right]\left( t  \right){\rm{,\;}}{g_f}\left( t \right){\rm{:}} = W\left( s \right)\left[ g \right]\left( t  \right)$ and $W\left( s \right)$ {is a stable minimum-phase transfer function.}
\end{proposition}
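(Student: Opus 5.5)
The plan is to compute the sliding-window cross-correlation explicitly and show that its non-oscillating part is a continuous function of $h$. For sums of sinusoids, $f(\tau)g(\tau+h)$ is a sum of products $A_i^fA_j^g\sin(\omega_i^f\tau+\phi_i^f)\sin(\omega_j^g\tau+\omega_j^gh+\phi_j^g)$. By the product-to-sum identity, each product equals
\[
\tfrac{1}{2}A_i^fA_j^g\bigl[\cos\bigl((\omega_i^f-\omega_j^g)\tau+\phi_i^f-\phi_j^g-\omega_j^gh\bigr)-\cos\bigl((\omega_i^f+\omega_j^g)\tau+\phi_i^f+\phi_j^g+\omega_j^gh\bigr)\bigr].
\]
For any nonzero frequency $\nu$ we have $\bigl|\frac{1}{T}\int_t^{t+T}\cos(\nu\tau+c)\,d\tau\bigr|\le \frac{2}{|\nu|T}$, uniformly in $t$ and $c$, hence uniformly in $h$. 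I take all frequencies to be positive; a zero frequency would be a constant term, handled the same way. Then every term whose frequency difference is nonzero vanishes in the limit. If $\omega_i^f\ne\omega_j^g$ for all $i,j$, the limit is zero, so $f\bot g$. If $\omega_i^f=\omega_j^g$ for some pair, the limit is
\[
C_W(h)=\sum_{(i,j):\,\omega_i^f=\omega_j^g}\tfrac{1}{2}A_i^fA_j^g\cos\bigl(\phi_i^f-\phi_j^g-\omega_j^gh\bigr).
\]
This is continuous in $h$. It is a nonzero trigonometric polynomial: since the $\omega_j^g$ are distinct, each matched frequency appears once with amplitude $\tfrac12A_i^fA_j^g>0$. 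So some $h^*$ gives $C_W(h^*)\ne0$, and $f\parallel g$. This proves item 1).

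For items 2) and 3), I would split the output of $W(s)$ into a steady-state part and a transient. Let $W$ be stable with impulse response $w$, and let the signals be zero-extended for $t<t_0$. Then
\[
f_f(t)=\sum_i A_i^f|W(\mathrm{j}\omega_i^f)|\sin\bigl(\omega_i^ft+\phi_i^f+\arg W(\mathrm{j}\omega_i^f)\bigr)+\varepsilon_f(t).
\]
Here $\varepsilon_f$ decays exponentially, because it is a finite combination of initial-condition modes of the stable poles. Minimum phase (indeed just the absence of zeros on the imaginary axis) guarantees $|W(\mathrm{j}\omega_i^f)|>0$, so the steady-state part has the same form as in the hypothesis: strictly positive amplitudes, the same distinct frequencies, shifted phases. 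Applying item 1) to the steady-state parts gives the claims for $(f_f,g)$ and $(f_f,g_f)$. In the dependent case the new $C_W$ is the same sum with amplitudes scaled by $|W(\mathrm{j}\omega)|$ (or its square), so it is still a nonzero trigonometric polynomial.

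The main obstacle is removing the transients uniformly in $t\ge t_0$, since the definitions take $\sup_{t\ge t_0}$, and a window starting at $t_0$ does see the transient. I would bound the cross terms, e.g.
\[
\Bigl|\frac{1}{T}\int_t^{t+T}\varepsilon_f(\tau)g(\tau+h)\,d\tau\Bigr|\le \frac{\|g\|_\infty}{T}\int_{t_0}^\infty|\varepsilon_f(\tau)|\,d\tau=\frac{c}{T}.
\]
This tends to zero uniformly in $t$ because $\varepsilon_f\in L_1$ and $g$ is bounded. The terms involving $\varepsilon_g(\tau+h)$ are handled similarly, with a bound uniform for each fixed $h$. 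This is all the definition requires, since the limit is taken for each $h$.
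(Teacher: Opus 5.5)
Your proof is correct under the reading that all frequencies are strictly positive. The paper does not prove this auxiliary proposition in its text; it lists it as a tool and defers details to the supplementary material. Your direct computation is the natural way to establish it: product-to-sum, $O(1/T)$ decay of every nonzero-frequency average uniformly in $t$, and a continuous trigonometric polynomial with positive coefficients at the matched frequencies. After filtering, the same sinusoids reappear with amplitudes scaled by $|W(\mathrm{j}\omega)|\neq 0$, plus an $L_1$ transient contributing $O(1/T)$.

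Three points should be tightened.

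(i) Positivity of the frequencies is \emph{needed} for the statement to hold, so it is not just a convenience. Your remark that a zero frequency is ``handled the same way'' is inaccurate. If $\omega_i^f=\omega_j^g=0$, the sum-frequency term does not average out either, and the pair contributes the $h$-independent constant $A_i^fA_j^g\sin\phi_i^f\sin\phi_j^g$. This vanishes when a phase is $0$ or $\pi$, so the dependence claim can fail. With negative frequencies, harmonics can even cancel, e.g.\ $\sin t+\sin(-t)\equiv 0$. State $\omega>0$ as a standing assumption rather than a simplification.

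(ii) Signals are zero-extended for $t<t_0$. So in item 1) with $h<0$, the factor $g(\tau+h)$ vanishes on $[t_0,t_0-h)$, and your bound is not uniform in $h$ as claimed. The extra error is at most $|h|\,\|f\|_\infty\|g\|_\infty/T$. This suffices because the definitions take the limit for each fixed $h$, which is exactly the observation you already make for the $\varepsilon_g(\tau+h)$ terms.

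(iii) The claim $C_W\not\equiv 0$ deserves one line. For example, for a matched pair $\omega=\omega_i^f=\omega_j^g$,
$\lim_{H\to\infty}\frac{1}{H}\int_0^H C_W(h)e^{\mathrm{j}\omega h}\,dh=\tfrac14 A_i^fA_j^g e^{\mathrm{j}(\phi_i^f-\phi_j^g)}\neq 0$.
This holds because all other matched frequencies are distinct from $\omega$ and positive, so their averages vanish.
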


{We also introduce a more general version of proposition 3.}

\begin{proposition}
{{If $f{\rm{:\;}}\left[ {{t_0}{\rm{,\;}}\infty } \right) \mapsto \mathbb{R}$ and $g{\rm{:\;}}\left[ {{t_0}{\rm{,\;}}\infty } \right) \mapsto \mathbb{R}$ are such that a) $f \bot g\left( {f\parallel g} \right){\rm{,}}$ and {b) $f \in L_\infty$ and $g \in L_\infty$,} then}}
\begin{enumerate}
\item[1)]${f_f} \bot g\left( {{f_f}\parallel g} \right){\rm{,}}$
\item[2)]${f_f} \bot {g_f}\left( {{f_f}\parallel {g_f}} \right){\rm{,}}$ 
\end{enumerate}
{{where ${f_f}\left( t \right){\rm{:}} = W\left( s \right)\left[ f \right]\left( t  \right){\rm{,\; }}{g_f}\left( t \right){\rm{:}} = W\left( s \right)\left[ g \right]\left( t  \right)$ and $W\left( s \right)$ is a stable {strictly-proper} minimum-phase transfer function.}}
\end{proposition}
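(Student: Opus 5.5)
Because $W(s)$ is stable and strictly proper, I would write the filtered signal as a convolution. Let $w\left(\cdot\right)$ be the impulse response of $W(s)$. Then
\begin{equation*}
f_f\left(t\right)=\int\limits_{t_0}^{t} w\left(t-\sigma\right)f\left(\sigma\right)d\sigma+\epsilon_f\left(t\right),
\end{equation*}
where $\epsilon_f$ is an exponentially decaying term due to initial conditions. Stability and strict properness give $\left|w\left(\theta\right)\right|\le c e^{-\lambda \theta}$ for $\theta\ge0$. Hence $w\in L_1\cap L_\infty$, with no Dirac component.

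I would then substitute $\theta=t-\sigma$, extending $f$ by zero on $\left(-\infty,t_0\right)$ as the paper stipulates, and exchange the order of integration (Fubini). This gives, for any $h$,
\begin{equation*}
\frac{1}{T}\int\limits_t^{t+T} f_f\left(\tau\right)g\left(\tau+h\right)d\tau=\int\limits_0^\infty w\left(\theta\right)\left[\frac{1}{T}\int\limits_t^{t+T} f\left(\tau-\theta\right)g\left(\tau+h\right)d\tau\right]d\theta+\frac{1}{T}\int\limits_t^{t+T}\epsilon_f\left(\tau\right)g\left(\tau+h\right)d\tau.
\end{equation*}
The last term is bounded by $\left\|g\right\|_\infty\left\|\epsilon_f\right\|_{L_1}/T$, so it vanishes uniformly in $t$. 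In the inner average I shift the variable by $\theta$. It becomes $\frac{1}{T}\int_{t-\theta}^{t-\theta+T} f\left(s\right)g\left(s+\theta+h\right)ds$, which is exactly the object in the definition of $\bot$ (resp. $\parallel$) with lag $h+\theta$. This step needs care for windows starting before $t_0$: there the zero extension only removes a piece of length at most $\theta$, contributing $O\left(\theta\left\|f\right\|_\infty\left\|g\right\|_\infty/T\right)$. That error is integrable against $\left|w\right|$ because $\int\theta\left|w\right|d\theta<\infty$.

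The key step, and the main obstacle, is passing the limit $T\to\infty$ (uniformly in $t$) under the $\theta$-integral. The definition only gives convergence for each fixed lag, not uniformly in the lag. I would use a dominated-convergence argument with a truncation. The inner average is bounded by $\left\|f\right\|_\infty\left\|g\right\|_\infty$ for all $\theta,t,T$, which is where hypothesis b) is used. So for any $\varepsilon>0$ I can pick $\Theta$ with $\left\|f\right\|_\infty\left\|g\right\|_\infty\int_\Theta^\infty\left|w\right|d\theta<\varepsilon$. On $\left[0,\Theta\right]$ I then need, for each fixed $h$, that $\sup_t$ of the deviation tends to zero for every $\theta$. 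This pointwise-in-$\theta$ convergence plus uniform boundedness lets dominated convergence apply to $\sup_t$ of the deviation, which is a function of $\theta$ alone. To avoid measurability subtleties of the supremum, I would alternatively discretize $\left[0,\Theta\right]$ and use the Lipschitz-type continuity of the averages in the lag. A change of lag by $\delta$ alters the average by at most $\left\|f\right\|_\infty\cdot\sup_{|u|\le\delta}$ of the windowed $L_1$ variation of $g$, which is not automatically small. So I would prefer the dominated-convergence route, bounding $\int_0^\Theta\left|w\right|\sup_t\left|\cdot\right|d\theta$ via Fatou/reverse Fatou with the $L_\infty$ bound.

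The conclusion of item 1) follows. In the independent case the limit is $0$, so $f_f\bot g$. In the dependent case the limit is $\tilde C_W\left(h\right)=\int_0^\infty w\left(\theta\right)C_W\left(h+\theta\right)d\theta$. This is continuous by dominated convergence, since $C_W$ is continuous and bounded by $\left\|f\right\|_\infty\left\|g\right\|_\infty$. It remains to show $\tilde C_W\not\equiv0$. In Fourier terms, $\tilde C_W$ is the cross-correlation filtered by $W\left(-{\rm j}\omega\right)$. Minimum phase implies $W$ has no zeros on the imaginary axis, so a nonzero $C_W$ (as a tempered distribution with nonzero spectrum) cannot be annihilated. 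I would state this spectral argument as the justification for nontriviality. Item 2) then follows by applying item 1) twice with the roles swapped: first $f_f$ versus $g$, then $g$ versus $f_f$ (using symmetry of $\bot$, $\parallel$ under $h\mapsto -h$). This requires noting $f_f\in L_\infty$, which holds since $w\in L_1$.
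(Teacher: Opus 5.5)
Your proposal is correct and follows essentially the same route as the paper's argument. You represent $f_f$ through the Dirac-free, exponentially bounded impulse response of the stable strictly proper $W(s)$, exchange the order of integration, and pass $T\to\infty$ under the lag integral by dominated convergence using $f,g\in L_\infty$; putting $\sup_{t\ge t_0}$ inside the $\theta$-integral is what makes the convergence uniform in $t$. The minimum-phase property is used to keep $\int_0^\infty w(\theta)C_W(h+\theta)d\theta$ from vanishing identically, and item 2) follows from item 1).

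Two small clean-ups are worth making. First, the measurability worry is moot: for fixed $T$, the map $\theta\mapsto\sup_{t\ge t_0}|\cdot|$ is a supremum of functions continuous in $\theta$ (translation is continuous in $L_1$ on compacts), hence lower semicontinuous and measurable. Second, in the spectral step, state explicitly that $1/W(-{\rm j}\omega)$ is a smooth, polynomially bounded multiplier, because $W$ has no zeros on the imaginary axis. Hence $W(-{\rm j}\omega)\widehat{C_W}=0$ forces $\widehat{C_W}=0$, even though $W(-{\rm j}\omega)\to0$ at infinity.
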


\section{Problem Statement}
We consider a linear regression equation
\begin{equation}\label{1}
z\left( t \right) = {\varphi ^{\top}}\left( t \right)\theta 
\end{equation}
in conjunction with the measurement (observation) equations
\begin{equation}\label{2}
\begin{array}{l}
\hat z\left( t \right){\rm{:}} = z\left( t \right) + {\eta _f}\left( t \right) + {\eta _z}\left( t \right){\rm{,}}\\
\hat \varphi \left( t \right){\rm{:}} = \varphi \left( t \right) + {\eta _\varphi }\left( t \right){\rm{,}}
\end{array}
\end{equation}	
where $\hat z\left( t \right) \!\in\! \mathbb{R}$ and $\hat \varphi \left( t \right) \!\in\! {\mathbb{R}^n}$ are measurable regressand and regressor, $\theta  \in {\mathbb{R}^n}$ stands for unknown parameters, ${\eta _\varphi }\left( t \right)\!\in\!{{\mathbb{R}} ^n}{\rm{,\;}}{\eta _z}\left( t \right) \in {\mathbb{R}}$ denote measurement noises, which are assumed to be bounded deterministic signals, ${\eta _f}\left( t \right) \in \mathbb{R}$ is an unmeasurable external perturbation that is also assumed to be a bounded deterministic signal. 

Substitution of \eqref{2} into \eqref{1} yields a regression equation with both measurable regressor and regressand:
\begin{equation}\label{3}
\hat z\left( t \right) = {\hat \varphi ^{\top}}\left( t \right)\theta  + w\left( t \right){\rm{,}}
\end{equation}
and a bounded additive perturbation $w\left( t \right) \in \mathbb{R}$ that is defined as follows:
\begin{equation}\label{4}
w\left( t \right){\rm{:}} = {\eta _f}\left( t \right) + {\eta _z}\left( t \right) - \eta _\varphi ^{\top}\left( t \right)\theta.
\end{equation}

The goal is to design an estimation algorithm, which, for any \linebreak $\varepsilon  > 0$, ensures existence of $\delta  > 0$ such that it holds for all $T > \delta$ and $i = 1, \ldots {\rm{,}}n$ that: 
\begin{equation}\label{5}
{\tilde \theta _i}\left( t \right) \in {\cal E}{\rm{:}} = \left\{ {{{\tilde \theta }_i} \in \mathbb{R}{\rm{:\;}}\left| {{{\tilde \theta }_i}} \right| \le \varepsilon } \right\}{\rm{\;as\;}}t \to \infty {\rm{,}}
\end{equation}
where $\tilde \theta \left( t \right) = \hat \theta \left( t \right) - \theta$ stands for a parametric error, and $T > 0$ is a tuning gain of the estimation algorithm (sliding window width under the condition of the regressor persistent excitation).

Therefore, it is required to ensure asymptotic convergence of parametric error to an arbitrarily small neighborhood of zero, which is adjustable by the tuning parameter of the estimation law. This problem will be solved under the following assumptions.

\begin{assumption}$\hat \varphi  \in {\rm{ST}} \cap {\rm{PE}}$ and {$\hat{\varphi}\in L_\infty, w\in L_\infty$}.
\end{assumption}
\begin{assumption} There exists a scalar ${\rm{0}} < m \le n$ and known matrices ${{\cal L}_1} \in {\mathbb{R}^{n \times m}}{\rm{,\;}}{{\cal L}_2} \in {\mathbb{R}^{n \times \left( {n - m} \right)}}$ such that: 
\begin{equation}\label{6}
{{\cal L}_1}{\cal L}_1^{\top} + {{\cal L}_2}{\cal L}_2^{\top} = {I_n}{\rm{,}}
\end{equation}
and additionally\footnote{ {Note that independence and dependence notions are mutually exclusive but not exhaustive, {\emph{i.e}}., there exist signals that match neither definition 3 nor 4. Hence, the existence of such a decomposition is an {\emph{a priori}} strict structural-informational assumption on the LRE \eqref{3}.}}: 
\begin{enumerate}
\item[a)] for ${\hat \varphi ^{{\rm{ind}}}}\left( t \right){\rm{:}} = {\cal L}_1^{\top}\hat \varphi \left( t \right)$, it holds that ${\hat \varphi ^{{\rm{ind}}}} \bot w$,
\item [b)] for ${\hat \varphi ^{{\rm{dep}}}}\left( t \right){\rm{:}} = {\cal L}_2^{\top}\hat \varphi \left( t \right)$,  it holds that ${\hat \varphi ^{{\rm{dep}}}}\parallel w$.
\end{enumerate}
\end{assumption}

Assumption 2 is a condition of independence of $m$ elements of the regressor $\hat \varphi \left( t \right)$ from the additive perturbation $w\left( t \right)$. To fulfill this assumption, it is required that the noise ${\eta _\varphi }\left( t \right)$ does not affect all elements of the regressor $\varphi \left( t \right)$, and there exist elements of the regressor that are independent from ${\eta _f}\left( t \right)$ and ${\eta _z}\left( t \right)$. The matrices ${{\cal L}_1}{\rm{,\;}}{{\cal L}_2}$ are chosen for each particular case on the basis of \emph{a priori} knowledge via \emph{direct inspection} of each specific parametrization \eqref{1}. Let us illustrate the essence of the notation and assumptions introduced in the problem statement by the following example.

\textbf{Example 1.} Consider a first-order linear dynamical system
\begin{equation}\label{7}
\begin{array}{l}
\dot x\left( t \right) = ax\left( t \right) + bu\left( t \right) + f\left( t \right){\rm{,\;}}x\left( {{t_0}} \right) = 0,\\
y\left( t \right) = x\left( t \right) + {\eta _x}\left( t \right){\rm{,}}
\end{array}
\end{equation}
where $x\left( t \right)$ is a system state, $f\left( t \right)$ denotes an external perturbation, ${\eta _x}\left( t \right)$ stands for a measurement noise, and \linebreak $\theta  = { {\begin{bmatrix}
a&b
\end{bmatrix}}^{\top}}$ is a vector of unknown parameters.

A parametrization of the system \eqref{7} in the form of the linear regression equation \eqref{1} is written as:
\begin{equation*}
\begin{array}{l}
z\left( t \right){\rm{:}} = {\textstyle{s \over {ks + 1}}}\left[ x \right]\left( t \right)-{\textstyle{1 \over {ks + 1}}}\left[ f \right]\left( t \right){\rm{,\;}}\\\varphi \left( t \right){\rm{:}} = {{\begin{bmatrix}
{{\textstyle{1 \over {ks + 1}}}\left[ x \right]\left( t \right)}&{{\textstyle{1 \over {ks + 1}}}\left[ u \right]\left( t \right)}
\end{bmatrix}}^{\top}}{\rm{, }}
\end{array}
\end{equation*}	
where $k > 0$ and ${{s}} {\rm{: = }}{\textstyle{d \over {dt}}}$.

As external perturbation and noise are not available for measurement, we have the following observation equations
\begin{equation*}
\begin{array}{l}
\hat z\left( t \right){\rm{:}} = {\textstyle{s \over {ks + 1}}}\left[ y \right]\left( t \right) = z\left( t \right){\rm{ + }}\underbrace {{\textstyle{s \over {ks + 1}}}\left[ {{\eta _x}} \right]\left( t \right)}_{{\eta _z}\left( t \right)} + \underbrace {{\textstyle{1 \over {ks + 1}}}\left[ f \right]\left( t \right)}_{{\eta _f}\left( t \right)}{\rm{, }}\\
\hat \varphi \left( t \right){\rm{:}} = {{\begin{bmatrix}
{{\textstyle{1 \over {ks + 1}}}\left[ y \right]\left( t \right)}&{{\textstyle{1 \over {ks + 1}}}\left[ u \right]\left( t \right)}
\end{bmatrix}}^{\top}}{\rm{ =  }}\\\hfill=\varphi \left( t \right) + \underbrace {{{ {\begin{bmatrix}
{{\textstyle{1 \over {ks + 1}}}\left[ {{\eta _x}} \right]\left( t \right)}&0
\end{bmatrix}}}^{\top}}}_{{\eta _\varphi }\left( t \right)}.
\end{array}
\end{equation*}

For simplicity and without loss of generality, let us assume that the noise ${\eta _x}\left( t \right)$ and external disturbance $f\left( t \right)$ are represented as multiharmonic series with a finite number of terms. Then, if $u\left( t \right)$ is an external (feedforward) signal described by a multiharmonic series with a finite number of frequencies, then $u\left( t \right)$ contains no information about external disturbance $f\left( t \right)$ and noise ${\eta _x}\left( t \right)$, which means that $u\left( t \right)$ and $w\left( t \right)$ spectra have no common frequencies, and, therefore, by Proposition 3, Assumption 2 is met for ${\cal L}_1^{\top} = {\begin{bmatrix}0&1 \end{bmatrix}}$. On the other hand, the output signal $y\left( t \right)$ depends explicitly from measurement noise ${\eta _x}\left( t \right)$, and implicitly – from an external disturbance $f\left( t \right)$, and, therefore, by Proposition 3, Assumption 2 is not met for ${\cal L}_1^{\top} = {\begin{bmatrix}1&0\end{bmatrix}}$. If we assume that the signal $u\left( t \right)$ is an output of a feedback controller $u\left( t \right) =  - Ky\left( t \right){\rm{,\;}}K > 0$, then the second element of the measurable regressor $\hat \varphi \left( t \right)$ depends explicitly from measurement noise and implicitly – from the external disturbance, which means that Assumption 2 is not met, and all elements of the regressor depend from the additive disturbance. Therefore, in general case, Assumption 2 holds only if the regressor contains $m$ external signals, \emph{i.e.}, external inputs of the system/parameterization, which have no relation to system/parameterization internal states or their measurements.$\hfill\blacksquare$

\section{Main Result}

In this section, new estimation algorithms are proposed that allow one to achieve the stated goal \eqref{5} under various assumptions. Section A is to consider the case $m=n$, \emph{i.e.}, when all the elements of the regressor are independent from the additive perturbation. In Section B, the obtained results are extended to the case when ${{0}} < {{m}}  < n$ and condition $\theta  \in {D_\theta }{\rm{:}} = \left\{ {\theta  \in {\mathbb{R}^n}{\rm{:\;}}{{\cal H}^{\top}}\theta  = {0_{n - m}}} \right\}$ holds for some ${\cal H} \in {\mathbb{R}^{n \times \left( {n - m} \right)}}$. Section C also discusses the case ${{0}} < {{m}}  < n$, but the condition $\theta  \in {D_\theta }$ is replaced by less restrictive assumptions. 
\subsection{Fully independence case}

The fact that the condition $\hat \varphi  \bot w$ is met motivates us to introduce the following regressor extension scheme. Equation \eqref{3} is multiplied by $\hat \varphi \left( t \right)$, and the obtained result is integrated over a sliding window $\left[ {t - T{\rm{,\;}}t} \right]$ for all $t \ge {t_0}$ to have: 
\begin{equation*}
\begin{array}{l}
    {\textstyle{1 \over T}}\int\limits_{{\rm{max}}\left\{ {{t_0}{\rm{, }}t - T} \right\}}^t {\hat \varphi \left( \tau  \right)\hat z\left( \tau  \right)d\tau }  =\\ \hfill {\textstyle{1 \over T}}\!\!\!\int\limits_{{\rm{max}}\left\{ {{t_0}{\rm{, }}t - T} \right\}}^t \!\!\!{\hat \varphi \left( \tau  \right){{\hat \varphi }^{\top}}\left( \tau  \right)d\tau } \theta  + {\textstyle{1 \over T}}\!\!\!\int\limits_{{\rm{max}}\left\{ {{t_0}{\rm{, }}t - T} \right\}}^t \!\!\!{\hat \varphi \left( \tau  \right)w\left( \tau  \right)d\tau } {\rm{,}}
    \end{array}
\end{equation*}
or in a more compact form\footnote{As we integrate over a finite window, then boundedness of $Y\left( t \right){\rm{,\;}}\Phi \left( t \right){\rm{,\;}}W\left( t \right)$ follows from the boundedness of $\hat z\left( t \right){\rm{,\;}}\hat \varphi \left( t \right){\rm{,\;}}w\left( t \right)$.}:
\begin{equation}\label{8}
Y\left( t \right) = \Phi \left( t \right)\theta  + W\left( t \right){\rm{,}}
\end{equation}
where, for all $t \ge {t_0}$, $Y\left( t \right) \in {\mathbb{R}^n}$ and $\Phi \left( t \right) \in {\mathbb{R}^{n \times n}}$ are the solutions of
\begin{equation}\label{9}
\begin{array}{l}
\dot Y\left( t \right) = {\textstyle{1 \over T}}\left[ {\hat \varphi \left( t \right)\hat z\left( t \right) - \hat \varphi \left( {t - T} \right)\hat z\left( {t - T} \right)} \right]{\rm{,\;}}\\
\dot \Phi \left( t \right) = {\textstyle{1 \over T}}\left[ {\hat \varphi \left( t \right){{\hat \varphi }^{\top}}\left( t \right) - \hat \varphi \left( {t - T} \right){{\hat \varphi }^{\top}}\left( {t - T} \right)} \right]{\rm{,\;}}
\end{array}
\end{equation}
and $W\left( t \right)$ is a solution of:
\begin{equation*}
\dot W\left( t \right) = {\textstyle{1 \over T}}\left[ {\hat \varphi \left( t \right)w\left( t \right) - \hat \varphi \left( {t - T} \right)w\left( {t - T} \right)} \right]{\rm{,\;}}
\end{equation*}
with $Y\left( {{t_0}} \right) = {0_n}{\rm{,}}$ $\Phi \left( {{t_0}} \right) = {0_{n \times n}}{\rm{,}}$ $W\left( {{t_0}} \right) = {0_n}.$

Owing to Proposition 2, Assumptions 1 and 2 mean that the following limits exist:
\begin{equation*}{
\begin{array}{c}
\mathop {{\rm{lim}}}\limits_{T \to \infty } \mathop {{\rm{sup}}}\limits_{t \geqslant {t_0} + T} \left\| {\Phi \left( t \right) - \Lambda } \right\| = 0,\\
\mathop {{\rm{lim}}}\limits_{T \to \infty } \mathop {{\rm{sup}}}\limits_{t \geqslant {t_0} + T} \left\| {W\left( t \right)} \right\| = 0.
\end{array}}
\end{equation*}

Therefore, if Assumptions 1 and 2 are met simultaneously, then it holds that 
\begin{equation*}{
\mathop {{\rm{lim}}}\limits_{T \to \infty } \mathop {{\rm{sup}}}\limits_{t \geqslant {t_0} + T} \left\| {{\Phi ^{ - 1}}\left( t \right)Y\left( t \right) - \theta } \right\| = 0.}
\end{equation*}
and, using measurable signals \eqref{9}, an estimation law can be designed that ensures that the stated goal \eqref{5} is met. 

{In this study, it is proposed to apply the following DREM\cite{11} based estimation law:}
\begin{equation}\label{10}
\dot {\hat \theta} \left( t \right) =  - \gamma {\rm{adj}}\left\{ {\Phi \left( t \right)} \right\}\left( {\Phi \left( t \right)\hat \theta \left( t \right) - Y\left( t \right)} \right),
\end{equation}
where $\hat \theta \left( {{t_0}} \right) = {\hat \theta _0}$, $\gamma > 0$.

Formally, the conditions of such estimation law convergence are formulated and proved in the following theorem (proof is postponed to Supplementary material \cite{27}).
\begin{theorem}
    If Assumptions 1 and 2 are met, then in case $m=n$ the law \eqref{10} ensures that the stated goal \eqref{5} is achieved.
\end{theorem}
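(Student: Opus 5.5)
The plan is to reduce the vector law \eqref{10} to $n$ decoupled scalar equations via the DREM mechanism. Each scalar equation is then shown to be exponentially stable, with an input that can be made as small as we like by taking $T$ large.

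\textbf{Step 1: error dynamics.} Substituting \eqref{8} into \eqref{10} gives $\Phi\hat\theta - Y = \Phi\tilde\theta - W$. Using ${\rm adj}\{\Phi\}\Phi = \Delta I_n$ with $\Delta(t) := \det\{\Phi(t)\}$, we obtain
\begin{equation*}
\dot{\tilde\theta}_i(t) = -\gamma\Delta(t)\tilde\theta_i(t) + \gamma\mathcal{W}_i(t),\qquad \mathcal{W}(t) := {\rm adj}\{\Phi(t)\}W(t),
\end{equation*}
for each $i=1,\ldots,n$. So every component of the error obeys a scalar linear equation driven by the same time-varying gain $\Delta$.

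\textbf{Step 2: uniform bounds for large $T$.} By Assumption 1 and Proposition 2, $\Lambda>0$. By Proposition 1, $\sup_{t\ge t_0+T}\|\Phi(t)-\Lambda\|\to0$ as $T\to\infty$. Since $m=n$, we may take $\mathcal{L}_1=I_n$, so $\hat\varphi\bot w$, and then Proposition 1 also gives $\sup_{t\ge t_0+T}\|W(t)\|\to0$.

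By continuity of the determinant and the adjugate, for any $\mu\in(0,\det\Lambda)$ there is $\delta_1$ such that for all $T>\delta_1$ and $t\ge t_0+T$ we have $\Delta(t)\ge\mu>0$. In addition, $\|{\rm adj}\{\Phi(t)\}\|\le c$ for some constant $c$, which exists because $\Phi$ stays in a neighborhood of $\Lambda$. For $t\in[t_0,t_0+T]$ all signals are bounded, because integration over a finite window of bounded data yields bounded outputs, and $\Delta\ge 0$ since $\Phi$ is a Gram matrix. Hence $\tilde\theta$ is bounded on this finite interval, as it solves a linear ODE with bounded coefficients.

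\textbf{Step 3: ultimate bound.} For $t\ge t_0+T$, the comparison lemma applied to Step 1 gives
\begin{equation*}
|\tilde\theta_i(t)| \le e^{-\gamma\mu(t-t_0-T)}|\tilde\theta_i(t_0+T)| + \frac{c}{\mu}\sup_{\tau\ge t_0+T}\|W(\tau)\|.
\end{equation*}
Therefore $\limsup_{t\to\infty}|\tilde\theta_i(t)|\le \frac{c}{\mu}\sup_{\tau\ge t_0+T}\|W(\tau)\|$. Given $\varepsilon>0$, choose $\delta\ge\delta_1$ so large that $\sup_{t\ge t_0+T}\|W(t)\|\le \mu\varepsilon/(2c)$ for all $T>\delta$. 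Then $|\tilde\theta_i(t)|\le\varepsilon$ for all sufficiently large $t$, which is exactly goal \eqref{5}, and convergence to this set is exponential.

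\textbf{Main obstacle.} The delicate point is the uniformity in Step 2. The threshold $\delta$ must be independent of $t$, and the constants $\mu$ and $c$ must hold uniformly over all $T>\delta$. This requires the $\sup_{t}$ form of the limits in Definitions 1 and 3, together with their backward-window version in Proposition 1, rather than pointwise limits. A secondary check is that $\tilde\theta$ cannot blow up on the transient interval $[t_0,t_0+T]$, where $\Delta$ may vanish. This is handled by $\Delta\ge0$ together with the boundedness of $\mathcal{W}$, which limits growth on that interval to at most linear in $t$.
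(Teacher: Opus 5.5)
Your proposal is correct and follows essentially the same route as the paper's argument: DREM decoupling into $\dot{\tilde\theta}_i=-\gamma\Delta\tilde\theta_i+\gamma\bigl[{\rm adj}\{\Phi\}W\bigr]_i$, then using $\Lambda>0$ and the uniform-in-$t$ sliding-window limits (Propositions~1--2, with $\hat\varphi\bot w$ holding because $\mathcal{L}_1$ is orthogonal when $m=n$) to get, for all $T>\delta$, a $T$-independent lower bound on $\Delta$ and an arbitrarily small perturbation on $t\ge t_0+T$, hence exponential convergence to the $\varepsilon$-ball. The only difference appears to be cosmetic: the square-root terms in Theorem~2 suggest the paper gets the exponential bound from a quadratic Lyapunov function, whereas you integrate the scalar linear equation directly via the comparison lemma.
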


Therefore, if Assumptions 1, 2 are met and $m=n$ regressor elements are independent from the perturbation, then application of the sliding window filter \eqref{9} for the dynamic regressor extension ensures convergence of the parametric error to an arbitrarily small neighborhood of zero, which depends on the width of the sliding window in use. 
If $m < n$ regressor elements are independent from the additive perturbation, then the perturbation is represented as:
\begin{equation}\label{15}
W\left( t \right) = {W_1}\left( t \right) + {W_2}\left( t \right){\rm{,}}
\end{equation}
where ${W_1}\left( t \right){\rm{:}} = {{\cal L}_1}{\cal L}_1^{\top}W\left( t \right){\rm{,\;}}{W_2}\left( t \right){\rm{:}} = {{\cal L}_2}{\cal L}_2^{\top}W\left( t \right).$

In this case, only the component ${W_1}\left( t \right)$ can be reduced by increase of the filter \eqref{9} parameter $T>0$, and it is impossible to reduce the component ${W_2}\left( t \right)$, and therefore, in this case, the estimation law \eqref{10} does not ensure achievement of the stated goal. The properties of the estimation law \eqref{10} for this case are strictly shown in the following theorem.
\begin{theorem}
    If Assumptions 1, 2 are met, then, in case $m < n$, the estimation law \eqref{10} ensures for any $\varepsilon>0$ that there exists $\delta > 0$ such that, if $T > \delta$, then it holds for all $t \ge {t_0} + T$ that:
\begin{equation*}
\begin{array}{l}
\left| {{{\tilde \theta }_i}\left( t \right)} \right| \le {e^{ - \mu \left( {t - {t_0} - T} \right)}}\left| {{{\tilde \theta }_i}\left( {{t_0} + T} \right)} \right| + \\\hfill\quad\quad\quad\quad\quad\quad\quad\quad\quad\quad\quad\quad + \sqrt {{\textstyle{\rho  \over \mu }}C_W^{\top}{{\cal L}_2^{\top}{\cal L}_2}{C_W}}  + \sqrt {2{\textstyle{\rho  \over \mu }}} \varepsilon {\rm ,}
\end{array}
\end{equation*}
where $\varepsilon  > 0$ stands for an arbitrarily small scalar, and the scalars $\rho  > 0,{\rm{\;}}\mu  > 0$ are defined in the proof. 
\end{theorem}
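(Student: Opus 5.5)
The plan follows the standard DREM argument. Let $\Delta(t) := {\rm{det}}\{\Phi(t)\}$ and $\mathcal{Y}(t) := {\rm{adj}}\{\Phi(t)\}Y(t)$. Premultiplying \eqref{8} by ${\rm{adj}}\{\Phi(t)\}$ gives the decoupled scalar regressions
\begin{equation*}
\mathcal{Y}_i(t)=\Delta(t)\theta_i+\mathcal{W}_i(t), \qquad \mathcal{W}(t):={\rm{adj}}\{\Phi(t)\}W(t).
\end{equation*}
Since ${\rm{adj}}\{\Phi\}\Phi=\Delta I_n$, the law \eqref{10} then yields, element by element,
\begin{equation*}
\dot{\tilde\theta}_i(t)=-\gamma\Delta^2(t)\tilde\theta_i(t)+\gamma\Delta(t)\mathcal{W}_i(t).
\end{equation*}

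\textbf{Step 1: lower bound on the excitation.} By Proposition 1 and Assumption 1, $\sup_{t\ge t_0+T}\|\Phi(t)-\Lambda\|\to 0$ as $T\to\infty$, and by Proposition 2 we have $\Lambda>0$. Continuity of the determinant and adjugate then gives a $\delta_1$ such that, for $T>\delta_1$ and $t\ge t_0+T$,
\begin{equation*}
\Delta(t)\ge\underline{\Delta}:=\tfrac12{\rm{det}}\{\Lambda\}>0, \qquad |\Delta(t)|\le\overline{\Delta}, \qquad \|{\rm{adj}}\{\Phi(t)\}\|\le\overline{a}.
\end{equation*}
All three bounds are uniform in $T$, because $\hat\varphi\in L_\infty$. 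I would set $\mu:=\gamma\underline{\Delta}^2$ (possibly with a factor $\tfrac12$, depending on the Lyapunov bookkeeping).

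\textbf{Step 2: splitting the perturbation.} By \eqref{6} and \eqref{15}, $W=W_1+W_2$. By Assumption 2a and Proposition 1, $\|\mathcal{L}_1^\top W(t)\|\le\varepsilon$ for $t\ge t_0+T$ once $T>\delta_2$. For $W_2$, Assumption 2b and Proposition 1 (with $h=0$) give
\begin{equation*}
\mathcal{L}_2^\top W(t)=\tfrac1T\int_{t-T}^t\hat\varphi^{\rm dep}w\,d\tau\to\mathcal{L}_2^\top C_W \quad\text{uniformly},
\end{equation*}
where $C_W$ is understood as $C_W(0)$ lifted to $\mathbb{R}^n$. Hence $\|\mathcal{L}_2^\top W(t)\|^2\le C_W^\top\mathcal{L}_2\mathcal{L}_2^\top C_W+\varepsilon'$. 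I expect the main obstacle to be fixing the precise way $C_W$ enters, namely reconciling its dimension with the form $C_W^\top\mathcal{L}_2^\top\mathcal{L}_2C_W$ in the statement. I would first try interpreting $C_W\in\mathbb{R}^{n-m}$ as the limit of $\mathcal{L}_2^\top W$, so that $W_2\to\mathcal{L}_2C_W$ and $\|W_2\|^2\to C_W^\top\mathcal{L}_2^\top\mathcal{L}_2C_W$. The residual $\varepsilon'$ can then be absorbed into the $\varepsilon$-term, using $(a+b)^2\le2a^2+2b^2$; this is the source of the factor $2$ under the second square root.

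\textbf{Step 3: comparison argument.} For $t\ge t_0+T$, take $V_i=\tilde\theta_i^2$. Young's inequality gives
\begin{equation*}
\dot V_i\le-2\gamma\Delta^2V_i+2\gamma|\Delta||\mathcal{W}_i||\tilde\theta_i|\le-\gamma\Delta^2V_i+\gamma\mathcal{W}_i^2.
\end{equation*}
Moreover,
\begin{equation*}
\mathcal{W}_i^2\le\overline{a}^2\|W\|^2\le2\overline{a}^2\bigl(\|W_1\|^2+\|W_2\|^2\bigr).
\end{equation*}
Setting $\rho:=\gamma\overline{a}^2$ (up to constants) and using $\Delta^2\ge\underline{\Delta}^2$, the comparison lemma yields
\begin{equation*}
V_i(t)\le e^{-\mu(t-t_0-T)}V_i(t_0+T)+\tfrac{\rho}{\mu}\bigl(C_W^\top\mathcal{L}_2^\top\mathcal{L}_2C_W+2\varepsilon^2\bigr).
\end{equation*}
Taking square roots and applying $\sqrt{a+b+c}\le\sqrt a+\sqrt b+\sqrt c$ gives the claimed bound. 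To match the statement's $e^{-\mu(\cdot)}$ rather than $e^{-\mu(\cdot)/2}$, the constants $\mu,\rho$ are redefined accordingly, and $\delta:=\max\{\delta_1,\delta_2\}$.
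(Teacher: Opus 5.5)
Your error equation is wrong, and all of Step 3 is computed for it. The law \eqref{10} is not the classical DREM law $\dot{\hat\theta}=-\gamma\Delta(\Delta\hat\theta-\mathcal{Y})$; it has only one adjugate factor. Since $\Phi\hat\theta-Y=\Phi\tilde\theta-W$ and $\mathrm{adj}\{\Phi\}\Phi=\Delta I_n$, it gives
\begin{equation*}
\dot{\tilde\theta}_i(t)=-\gamma\Delta(t)\tilde\theta_i(t)+\gamma\mathcal{W}_i(t).
\end{equation*}
This is linear in $\Delta$, with no $\Delta$ multiplying $\mathcal{W}_i$.

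This changes the argument in one essential respect. Stability of the error dynamics now depends on the sign of $\Delta$, not on $\Delta^2$. Your Step 1 bound $\Delta(t)\ge\underline\Delta=\frac12\det\{\Lambda\}>0$ for $t\ge t_0+T$, $T>\delta_1$, is therefore the key fact, not a convenience; if $\Delta$ could be negative the error would grow. The Young step must also be reweighted: $2\gamma\tilde\theta_i\mathcal{W}_i\le\gamma\Delta\tilde\theta_i^2+\gamma\Delta^{-1}\mathcal{W}_i^2$, which gives $\dot V_i\le-\gamma\underline\Delta V_i+\gamma\underline\Delta^{-1}\bar a^2\|W\|^2$. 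So the rate is of order $\gamma\underline\Delta$ (not $\gamma\underline\Delta^2$), and $\rho$ is of order $\gamma\bar a^2/\underline\Delta$. The ratio $\rho/\mu=\bar a^2/\underline\Delta^2$ is the same as in your version and independent of $\gamma$, consistent with the remark after the theorem. Hence the stated bound survives once the dynamics are corrected.

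A cleaner option is to solve the scalar linear equation by variation of constants. This gives $|\tilde\theta_i(t)|\le e^{-\gamma\underline\Delta(t-t_0-T)}|\tilde\theta_i(t_0+T)|+\underline\Delta^{-1}\sup_{\tau\ge t_0+T}|\mathcal{W}_i(\tau)|$. That is exactly the stated form with $\mu=\gamma\underline\Delta$ and $\rho=\gamma\bar a^2/\underline\Delta$, and it avoids halving the exponent and redefining constants.

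The rest of your outline is sound and is the natural route reflected in the shape of the bound. It uses $\Phi\to\Lambda>0$ uniformly on $t\ge t_0+T$ (Propositions 1 and 2), and $\mathcal{L}_1^\top W\to 0$, $\mathcal{L}_2^\top W\to C_W$ uniformly (Assumption 2 and Proposition 1 with $h=0$), followed by a comparison bound and square roots. Your dimensional concern resolves as you guessed. $C_W=C_W(0)\in\mathbb{R}^{n-m}$ is the cross-correlation of Definition 4 for the pair $(\hat\varphi^{\mathrm{dep}},w)$, so no lifting is needed. Moreover, $[\mathcal{L}_1\ \mathcal{L}_2]$ is square, so \eqref{6} makes it orthogonal. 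Hence $\mathcal{L}_2^\top\mathcal{L}_2=I_{n-m}$ and $\|W\|^2=\|\mathcal{L}_1^\top W\|^2+\|\mathcal{L}_2^\top W\|^2$ exactly. For large $T$ this is at most $C_W^\top\mathcal{L}_2^\top\mathcal{L}_2C_W+2\varepsilon^2$, which produces the $\sqrt{2\rho/\mu}\,\varepsilon$ term; the extra factor $2$ you placed in front of $\|W_1\|^2+\|W_2\|^2$ is unnecessary.
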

\begin{proof}Proof is given in Supplementary material \cite{27}.\end{proof}

    Considering the upper bound from Theorem 2, the ratio ${\textstyle{\rho  \over \mu }}$ is a function of the matrix $\Lambda  \in {\mathbb{R}^{n \times n}}$, and therefore, this ratio cannot be made arbitrarily small for a fixed value of $\Lambda$. Therefore, the bound $\sqrt {{\textstyle{\rho  \over \mu }}C_W^{\top}{{\cal L}_2^{\top}{\cal L}_2}{C_W}}$ is completely determined by the ratio between the autocovariance $\Lambda  \in {\mathbb{R}^{n \times n}}$ and cross-correlation ${C_W} \in {\mathbb{R}^n}$ (if the minimum eigenvalue of the matrix $\Lambda$ is ``large'' and $C_W^{\top}{C_W}$ is ``small'', then the bound $\sqrt {{\textstyle{\rho  \over \mu }}C_W^{\top}{{\cal L}_2^{\top}{\cal L}_2}{C_W}}$ is small). Thus, Theorem 2 shows that the estimation law \eqref{10} allows one to eliminate the component ${W_1}\left( t \right)$ up to an arbitrarily small error $\varepsilon$, and, at the same time, such a law is robust in the UUB sense with respect to ${W_2}\left( t \right)$.

\subsection{Partial independence case (Scheme 1)}

The analysis from previous subsection shows that, when only $m < n$ regressor elements are independent from the disturbance, the component ${W_2}\left( t \right)$ prevents the achievement of the stated goal \eqref{5}. In order to annihilate such distortion, equation \eqref{8} is multiplied by ${\rm{adj}}\left\{ {\Phi \left( t \right)} \right\}$ to have
\begin{equation}\label{20}
\begin{array}{l}
{\cal Y}\left( t \right) = \Delta \left( t \right)\theta  +\hfill\\\hfill\quad\quad\quad+ {\rm{adj}}\left\{ {\Phi \left( t \right)} \right\}\left[ {{{\cal L}_1}{\cal L}_1^{\top}W\left( t \right) + {{\cal L}_2}{\cal L}_2^{\top}W\left( t \right)} \right]{\rm{,}}
\end{array}
\end{equation}
where ${\cal Y}\left( t \right){\rm{:}} = {\rm{adj}}\left\{ {\Phi \left( t \right)} \right\}Y\left( t \right){\rm{,\;}}\Delta \left( t \right){\rm{:}} = {\rm{det}}\left\{ {\Phi \left( t \right)} \right\}.$

The following assumption is adopted for further analysis.

\begin{assumption}
    $\theta  \in {D_\theta }{\rm{:}} = \left\{ {\theta  \in {\mathbb{R}^n}{\rm{:\;}}{{\cal H}^{\top}}\theta  = {0_{n - m}}} \right\}$, where the matrix ${\cal H} \in {\mathbb{R}^{n \times \left( {n - m} \right)}}$  is known and satisfies the condition ${\rm{rank}}\left\{ {{{\cal H}^{\top}}{\rm{adj}}\left\{ \Lambda  \right\}{{\cal L}_2}} \right\} = n - m$ for {$\Lambda \in \mathbb{R}^{n \times n}$ such that $\mathop {{\rm{lim}}}\limits_{T \to \infty } \mathop {{\rm{sup}}}\limits_{t \geqslant {t_0} + T} \left\| {\Phi \left( t \right) - \Lambda } \right\| = 0.$}
\end{assumption}

If Assumption 3 is met, then multiplication of \eqref{20}, firstly, by ${{\cal H}^{\top}}$ and then by ${{\cal L}_2}{\rm{adj}}\left\{ {{{\cal H}^{\top}}{\rm{adj}}\left\{ {\Phi \left( t \right)} \right\}{{\cal L}_2}} \right\}$ allows one to obtain:
\begin{equation}\label{21}
{\cal N}\left( t \right) = {\cal M}{_1}\left( t \right){W_1}\left( t \right) + {{\cal M}_2}\left( t \right){W_2}\left( t \right){\rm{,}}
\end{equation}
where 
\begin{equation*}
\begin{array}{l}
{\cal N}\left( t \right){\rm{:}} = {{\cal L}_2}{\rm{adj}}\left\{ {{{\cal H}^{\top}}{\rm{adj}}\left\{ {\Phi \left( t \right)} \right\}{{\cal L}_2}} \right\}{{\cal H}^{\top}}{\cal Y}\left( t \right){\rm{,}}\\
{{\cal M}_1}\left( t \right){\rm{:}} = {{\cal L}_2}{\rm{adj}}\left\{ {{{\cal H}^{\top}}{\rm{adj}}\left\{ {\Phi \left( t \right)} \right\}{{\cal L}_2}} \right\}{{\cal H}^{\top}}{\rm{adj}}\left\{ {\Phi \left( t \right)} \right\}{\rm{,}}\\
{{\cal M}_2}\left( t \right){\rm{:}} = {\rm{det}}\left\{ { {{{\cal H}^{\top}}{\rm{adj}}\left\{ {\Phi \left( t \right)} \right\}{{\cal L}_2}}} \right\}.
\end{array}
\end{equation*}

Then, having multiplied \eqref{8} by ${{\cal M}_2}\left( t \right)$ and subtracting \eqref{21} from the obtained result, we have:
\begin{equation}\label{22}
\Upsilon \left( t \right) = {{\cal M}_2}\left( t \right)\Phi \left( t \right)\theta  + {\cal{M}}\left(t\right){W_1}\left( t \right){\rm{,}}
\end{equation}
where
\begin{equation*}
\begin{array}{c}
\Upsilon \left( t \right) = {{\cal M}_2}\left( t \right)Y\left( t \right) - {\cal N}\left( t \right),\\
{\cal{M}}\left(t\right)={{{\cal M}_2}\left(t\right){I_n} - {{\cal M}_1}\left(t\right)}.
\end{array}
\end{equation*}

Therefore, owing to Assumption 3, a new regression equation \eqref{22}, which does not contain a term ${W_2}\left( t \right)$, is obtained from \eqref{8}. At the same time, the component ${W_1}\left( t \right)$ is multiplied by a bounded term (when Assumption 1 is met), and still, when Assumption 2 is satisfied, it converges to zero asymptotically w.r.t. parameter $T > 0$ and uniformly w.r.t. time. Therefore, if Assumptions 1-3 are met simultaneously, on the basis of measurable signals $\Upsilon \left( t \right){\rm{,\;}}{{\cal M}_2}\left( t \right){\rm{,\;}}\Phi \left( t \right)$, an estimation law can be designed, which guarantees that the stated goal \eqref{5} is achieved. We apply the following estimation law:
\begin{equation}\label{23}
\dot {\hat \theta} \left( t \right)\!=\!- \gamma {{\cal M}_2}\left( t \right){\rm{adj}}\left\{ {\Phi \left( t \right)} \right\}\left( {{{\cal M}_2}\left( t \right)\Phi \left( t \right)\hat \theta\!-\!\Upsilon \left( t \right)} \right){\rm{,}}
\end{equation}
where $\hat \theta \left( {{t_0}} \right) = {\hat \theta _0}$ and $\gamma>0$.

The convergence conditions for such an estimation law are strictly shown and proved in the following theorem.
\begin{theorem}
    If Assumptions 1-3 are met, then the estimation law \eqref{23} ensures achievement of the stated goal \eqref{5}.
\end{theorem}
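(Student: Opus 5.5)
We first verify that \eqref{22} is an exact identity, i.e.\ that \eqref{21} reproduces the ${W_2}$-part of ${\cal M}_2(t)W(t)$. Since ${\cal H}^{\top}\theta = 0_{n-m}$, multiplying \eqref{20} by ${\cal H}^{\top}$ removes the term $\Delta(t){\cal H}^{\top}\theta$. Writing ${W_2}={\cal L}_2 {\cal L}_2^{\top}W$ and using ${\rm adj}\{A\}A=\det\{A\}I_{n-m}$ with $A={\cal H}^{\top}{\rm adj}\{\Phi\}{\cal L}_2$, we obtain
\[
{\cal L}_2{\rm adj}\{A\}A\,{\cal L}_2^{\top}W={\cal M}_2 W_2 .
\]
Subtracting this from ${\cal M}_2 Y={\cal M}_2\Phi\theta+{\cal M}_2W_1+{\cal M}_2W_2$ gives \eqref{22}. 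The error equation for \eqref{23} then reads
\[
\dot{\tilde\theta}(t)=-\gamma{\cal M}_2^2(t)\Delta(t)\tilde\theta(t)+\gamma{\cal M}_2(t){\rm adj}\{\Phi(t)\}{\cal M}(t)W_1(t),
\]
because ${\rm adj}\{\Phi\}\Phi=\Delta I_n$. The system is therefore decoupled element-wise with the common scalar gain $\gamma{\cal M}_2^2\Delta$.

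Next we establish uniform excitation and uniform boundedness for large $T$. By Propositions 1--2 and Assumption 1, $\sup_{t\ge t_0+T}\|\Phi(t)-\Lambda\|\to0$ as $T\to\infty$ with $\Lambda>0$. By continuity of $\det$ and ${\rm adj}$, $\Delta(t)\to\det\Lambda>0$ and ${\cal M}_2(t)\to\det\{{\cal H}^{\top}{\rm adj}\{\Lambda\}{\cal L}_2\}\neq0$, the latter by the rank condition of Assumption 3; both limits are uniform in $t\ge t_0+T$. Hence there exists $\delta_1$ such that for $T>\delta_1$ and $t\ge t_0+T$,
\[
{\cal M}_2^2(t)\Delta(t)\ge \tfrac12\,\det\Lambda\,\det{}^2\{{\cal H}^{\top}{\rm adj}\{\Lambda\}{\cal L}_2\}=:\mu_0>0 .
\]
Moreover ${\cal M}_2$, ${\rm adj}\{\Phi\}$ and ${\cal M}$ are bounded by a constant $\bar c$ independent of $T$, since $\hat\varphi\in L_\infty$ gives $\|\Phi\|\le\|\hat\varphi\|_\infty^2$.

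The third step is to make the remaining perturbation small. By Assumption 2a and Proposition 1 with $h=0$, $\sup_{t\ge t_0+T}\|{\cal L}_1^{\top}W(t)\|\to0$, because ${\cal L}_1^{\top}W$ is exactly the windowed average of $\hat\varphi^{\rm ind}w$. Hence $\|W_1(t)\|\le\varepsilon'$ for $T>\delta_2(\varepsilon')$. On $[t_0,t_0+T]$ the solution stays bounded: the error equation is linear with bounded coefficients and a finite horizon, so no finite escape occurs. The comparison lemma applied to each $|\tilde\theta_i|$ then yields, for $t\ge t_0+T$,
\[
|\tilde\theta_i(t)|\le e^{-\gamma\mu_0(t-t_0-T)}|\tilde\theta_i(t_0+T)|+\frac{\bar c\,\varepsilon'}{\mu_0}.
\]
Choosing $\varepsilon'=\mu_0\varepsilon/(2\bar c)$ and $\delta=\max\{\delta_1,\delta_2\}$, every $\tilde\theta_i$ enters $\{|\tilde\theta_i|\le\varepsilon\}$ exponentially fast, which is goal \eqref{5}.

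The main obstacle is step two, namely the uniformity of the lower bound on ${\cal M}_2^2\Delta$ over $t\ge t_0+T$. This must come from the uniform-in-$t$ limit in Proposition 1 combined with Lipschitz continuity of the polynomial maps $\det$ and ${\rm adj}$ on bounded sets. We must also ensure that the constants $\mu_0$ and $\bar c$ do not depend on $T$, so that enlarging $T$ shrinks only $\varepsilon'$.
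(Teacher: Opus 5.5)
Your proof is correct and follows essentially the same route as the paper. You derive the exact $W_2$-free regression \eqref{22} from ${\cal H}^{\top}\theta=0_{n-m}$ and ${\rm adj}\{A\}A=\det\{A\}I_{n-m}$, obtain element-wise error dynamics with scalar gain $\gamma{\cal M}_2^2\Delta$, get a $T$-independent positive lower bound on that gain for $t\ge t_0+T$ from $\Phi\to\Lambda$ and the rank condition of Assumption~3, and make $W_1$ small via Assumption~2a and Proposition~1. Bounding the scalar error equation directly, rather than through a quadratic Lyapunov function with Young's inequality (as the $\sqrt{\rho/\mu}$-type bound of Theorem~2 suggests the paper does), changes only the form of the residual constant ($\bar c\,\varepsilon'/\mu_0$ versus $\sqrt{2\rho/\mu}\,\varepsilon$), not the argument.
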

\begin{proof}Proof is presented in Supplementary material \cite{27}.\end{proof}

In comparison with \eqref{10}, the estimation law \eqref{23} ensures that the goal \eqref{5} is achieved in case $m < n$ rather than $m=n$. However, in order to meet \eqref{5}, the law \eqref{23} requires adoption of additional Assumption 3, which is difficult to satisfy and verify. If the parameters $\theta$ are obtained via overparameterization, \emph{i.e.} $\theta {\rm{: = }}{\cal G}\left( \eta  \right){\rm{,\;}}\eta  \in {\mathbb{R}^{{n_\eta}}}$, ${n_\eta } \le n$, then, using the structure of the mapping ${\cal G}{\rm{:\;}}{\mathbb{R}^{{n_\eta }}} \mapsto {\mathbb{R}^n}$, it is sometimes possible to satisfy the requirements of Assumption 3.

\textbf{Example 2.} If $a =  - b$ in the system \eqref{7}, then, for some $\Lambda = {\Lambda ^{\top}} > 0$, Assumption 3 is met for ${{\cal H}^{\top}} = {\begin{bmatrix}
1&1
\end{bmatrix}}$, ${\cal L}_2^{\top} = {\begin{bmatrix}
1&0
\end{bmatrix}}.$ $\hfill\blacksquare$
\subsection{Partial independence case (Scheme 2)}

It is obvious that, in general, Assumption 3 is impossible to be met without {\emph{a priori}} knowledge of the parameters to select apropriate annihilator ${\cal{H}}^{\top}$. This assumption is proposed to be relaxed via transformation of the regression equation \eqref{1} into a new one with respect to a new extended vector of parameters $\Theta  = {\cal D}\theta$ such that: a) for $\Theta$, Assumption 3 is met owing to definition of the matrix ${\cal D} \in {\mathbb{R}^{2n \times n}}$, b) ${\rm{rank}}\left\{ {\cal D} \right\} = n$, {\emph{i.e.}}, the estimation of $\Theta  \in {\mathbb{R}^{2n}}$ is equivalent to the estimation of the original parameters $\theta  \in {\mathbb{R}^n}$.

To achieve this, the filter $W\left( s \right)\left[ . \right]$ ($W\left( s \right)$ is a stable {strictly-proper} minimum-phase transfer function) is applied to \eqref{3}:
\begin{equation}\label{28}
{z_f}\left( t \right) = {\varphi _f}\left( t \right)\theta  + {w_f}\left( t \right){\rm{,}}
\end{equation}
where
\begin{equation*}
\begin{array}{c}
{z_f}\left( t \right){\rm{:}} = W\left( s \right)\left[ \hat{z} \right]\left( t \right){\rm{,\;}}{w_f}\left( t \right){\rm{:}} = W\left( s \right)\left[ w \right]\left( t \right){\rm{,}}\\
{\varphi _f}\left( t \right){\rm{:}} = {{\begin{bmatrix}
{W\left( s \right)\left[ {{{\hat \varphi }_1}} \right]\left( t \right)}& \cdots &{W\left( s \right)\left[ {{{\hat \varphi }_n}} \right]\left( t \right)}
\end{bmatrix}}^{\top}}{\rm{. }}
\end{array}
\end{equation*} 

Having subtracted \eqref{28} from \eqref{3}, it is obtained that:
\begin{equation}\label{29}
\tilde z\left( t \right) = \phi \left( t \right)\Theta  + \tilde w\left( t \right){\rm{,}}
\end{equation}
where
\begin{equation}
\begin{array}{c}
\tilde z\left( t \right){\rm{:}} = \hat{z}\left( t \right) - {z_f}\left( t \right){\rm{,\;}}\tilde w\left( t \right){\rm{:}} = w\left( t \right) - {w_f}\left( t \right){\rm{,}}\\
\phi \left( t \right){\rm{:}} = {{\begin{bmatrix}
{\hat \varphi \left( t \right)}&{{\varphi _f}\left( t \right)}
\end{bmatrix}}^{\top}}
\end{array}
\end{equation} 
and ${\cal D} = {{\begin{bmatrix}{{I_n}}&{ - {I_n}}
\end{bmatrix}}^{\top}}$.

The main idea of transformation \eqref{28}, \eqref{29} is to meet artificially the main requirement of Assumption 3 for an extended vector of unknown parameters $\Theta$. By construction of \eqref{29}, the following proposition holds (proof is given in Supplementary material \cite{27}).
\begin{proposition}
    If $2m \ge n$, then, for ${\cal D} = {{\begin{bmatrix}
{{I_n}}&{ - {I_n}}
\end{bmatrix}}^{\top}}$, there exists ${{\cal H}_e} \in {\mathbb{R}^{2n \times 2\left( {n - m} \right)}}$ such that:
\begin{enumerate}
\item[1)] ${\cal H}_e^{\top}{\cal D} = {0_{2\left( {n - m} \right) \times n}}{\rm{,}}$
\item[2)] ${\rm{rank}}\left\{ {{\cal H}_e^{\top}} \right\} = 2\left( {n - m} \right).$
\end{enumerate}
\end{proposition}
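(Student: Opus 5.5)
The plan is a direct construction based on the left null space of $\mathcal{D}$. Since $\mathcal{D}=\begin{bmatrix} I_n & -I_n\end{bmatrix}^{\top}$ has rank $n$, its left null space $\{v\in\mathbb{R}^{2n}:\ v^{\top}\mathcal{D}=0_{1\times n}\}$ has dimension $2n-n=n$. Writing $v^{\top}=\begin{bmatrix} a^{\top} & b^{\top}\end{bmatrix}$, the condition $v^{\top}\mathcal{D}=a^{\top}-b^{\top}=0$ shows that this space is exactly $\{\begin{bmatrix} a^{\top} & a^{\top}\end{bmatrix}^{\top}:\ a\in\mathbb{R}^n\}$.

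Requirement 1) says that every column of $\mathcal{H}_e$ lies in this $n$-dimensional space. Requirement 2) asks for $2(n-m)$ linearly independent such columns. This is possible if and only if $2(n-m)\le n$, i.e. $2m\ge n$, which is precisely the hypothesis. So the hypothesis enters only through this dimension count, and I expect it to be the one substantive point.

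Concretely, I will pick any $\mathcal{K}\in\mathbb{R}^{n\times 2(n-m)}$ of full column rank $2(n-m)$. This is admissible because $2(n-m)\le n$; for instance, take the first $2(n-m)$ columns of $I_n$. A choice adapted to the later construction would be $\mathcal{K}=\begin{bmatrix}\mathcal{L}_2 & \mathcal{K}_2\end{bmatrix}$ with $\mathcal{K}_2\in\mathbb{R}^{n\times(n-m)}$ chosen from the range of $\mathcal{L}_1$. Such a $\mathcal{K}_2$ exists since $\operatorname{rank}\mathcal{L}_1=m\ge n-m$, and by \eqref{6} the columns of $\mathcal{L}_1$ and $\mathcal{L}_2$ are orthonormal and complementary. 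I then set
\begin{equation*}
\mathcal{H}_e:=\begin{bmatrix}\mathcal{K}\\ \mathcal{K}\end{bmatrix}\in\mathbb{R}^{2n\times 2(n-m)}.
\end{equation*}

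Verification is immediate:
\begin{equation*}
\mathcal{H}_e^{\top}\mathcal{D}=\mathcal{K}^{\top}I_n-\mathcal{K}^{\top}I_n=0_{2(n-m)\times n},
\end{equation*}
which gives 1). For 2), $\mathcal{H}_e^{\top}=\begin{bmatrix}\mathcal{K}^{\top} & \mathcal{K}^{\top}\end{bmatrix}$ has row space equal to that of $\mathcal{K}^{\top}$, so $\operatorname{rank}\{\mathcal{H}_e^{\top}\}=\operatorname{rank}\{\mathcal{K}\}=2(n-m)$.

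Finally, I will remark why the bound is sharp. If $2m<n$, any $\mathcal{H}_e$ satisfying 1) has columns in an $n$-dimensional subspace, so its rank is at most $n<2(n-m)$. This shows the condition $2m\ge n$ is exactly what the construction needs.
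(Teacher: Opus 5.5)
Your proof is correct and takes essentially the approach the statement calls for, namely that the left null space of ${\cal D}$ is $\{[a^{\top}\; a^{\top}]^{\top}: a\in\mathbb{R}^n\}$, of dimension $n$, so $2(n-m)$ linearly independent annihilating rows exist exactly when $2(n-m)\le n$, i.e. $2m\ge n$, realized by ${\cal H}_e=[{\cal K}^{\top}\;{\cal K}^{\top}]^{\top}$ for any full-column-rank ${\cal K}$ (the paper defers its own proof to the Supplementary Material, so I could not check it line by line). One small wording fix: it is the column space of ${\cal H}_e^{\top}=[{\cal K}^{\top}\;{\cal K}^{\top}]$ that coincides with that of ${\cal K}^{\top}$, since the row spaces lie in $\mathbb{R}^{2n}$ and $\mathbb{R}^{n}$ respectively; the rank conclusion is unchanged.
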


Thus, according to Proposition 5, a matrix \linebreak ${\cal H}_e^{\top} \in {\mathbb{R}^{2\left( {n - m} \right) \times 2n}}$ can be defined, which is an annihilator for a given matrix ${\cal D} \in {\mathbb{R}^{2n \times n}}$. However, the resulting regression equation \eqref{29} contains a new regressor $\phi \left( t \right)$ and a new perturbation $\tilde{w}\left( t \right)$, and, in general case, it does not follow from $\hat \varphi  \in {\rm{PE}}$ that the condition $\phi  \in {\rm{PE}}$ is met (a counter example is the case when ${\hat \varphi _i}\left( t \right) = const$ for some $i \in \left\{ {1, \ldots n} \right\}$. Therefore, for further design of the estimation law on the basis of the obtained equation \eqref{29}, it is proposed to adopt the following assumption.
\begin{assumption}
$\hat \varphi  \in {\rm{ST}} \cap {\rm{PE}} \Rightarrow \phi  \in {\rm{ST}} \cap {\rm{PE}}{\rm{.}}$
\end{assumption}

{On the other hand, the propagation of independence and dependence properties through transformations \eqref{28}, \eqref{29} is strictly investigated in the following proposition, which is obtained on the basis of Assumption 2 and Proposition 4.}

\begin{proposition}
    If Assumptions 1 and 2 are met, then: 
\begin{enumerate}
\item[a)] ${\phi ^{{\rm{ind}}}} \bot \tilde w$ holds for ${\phi ^{{\rm{ind}}}}\left( t \right){\rm{:}} = {\cal L}_{1e}^{\top}\phi \left( t \right)$,
\item[b)] ${\phi ^{{\rm{dep}}}}\parallel \tilde w$ holds for ${\phi ^{{\rm{dep}}}}\left( t \right){\rm{:}} = {\cal L}_{2e}^{\top}\phi \left( t \right)$,
\end{enumerate}
where 
\begin{equation*}{{\cal L}_{1e}} =  {\begin{bmatrix}
{{{\cal L}_1}}&{{0_{n \times m}}}\\
{{0_{n \times m}}}&{{{\cal L}_1}}
\end{bmatrix}}{\rm{,\;}}{{\cal L}_{2e}} = {\begin{bmatrix}
{{{\cal L}_2}}&{{0_{n \times \left( {n - m} \right)}}}\\
{{0_{n \times \left( {n - m} \right)}}}&{{{\cal L}_2}}
\end{bmatrix}}.\end{equation*}
\end{proposition}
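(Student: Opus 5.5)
The plan is to unfold the definitions blockwise and reduce every claim to Assumption 2 together with Proposition 4. By construction,
\[
\phi^{\rm ind}(t)=\begin{bmatrix}\mathcal L_1^\top\hat\varphi(t)\\ \mathcal L_1^\top\varphi_f(t)\end{bmatrix}=\begin{bmatrix}\hat\varphi^{\rm ind}(t)\\ W(s)[\hat\varphi^{\rm ind}](t)\end{bmatrix},\qquad \tilde w(t)=w(t)-w_f(t).
\]
Here we use that $\mathcal L_1$ is a constant matrix, so it commutes with the linear filter $W(s)$ (zero initial conditions). The same identities hold for $\phi^{\rm dep}$ with $\mathcal L_2$. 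The cross-correlation matrix $\frac1T\int_t^{t+T}\phi^{\rm ind}(\tau)\tilde w(\tau+h)\,d\tau$ therefore splits into four scalar-vector blocks:
\[
\hat\varphi^{\rm ind}\cdot w,\quad \hat\varphi^{\rm ind}\cdot w_f,\quad (\hat\varphi^{\rm ind})_f\cdot w,\quad (\hat\varphi^{\rm ind})_f\cdot w_f .
\]
It suffices to show that each block tends to zero uniformly in $t$ as $T\to\infty$ for every $h$, and then to apply the triangle inequality. Since the definitions are componentwise, I will work with each scalar component $(\mathcal L_1^\top\hat\varphi)_j$ against the scalar $w$. This is needed because Proposition 4 is stated for scalar signals.

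For part a), the first block vanishes directly by Assumption 2a. Boundedness of $\hat\varphi$ and $w$ (Assumption 1) gives boundedness of $\hat\varphi^{\rm ind}$, so Proposition 4 applies. Its item 1) gives $(\hat\varphi^{\rm ind}_j)_f\bot w$. Its item 2) gives $(\hat\varphi^{\rm ind}_j)_f\bot w_f$. For the block $\hat\varphi^{\rm ind}_j\cdot w_f$, I will use item 1) with the roles of $f$ and $g$ swapped. Independence is symmetric up to the shift: substituting $\tau\mapsto\tau-h$ changes the window only by a boundary term of size $O(|h|/T)$, which is negligible because the signals are bounded. Combining the four blocks yields $\phi^{\rm ind}\bot\tilde w$.

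For part b), the same decomposition applies, but now each block must converge to some continuous limit function. The first block tends to $C_W(h)$ by Assumption 2b. Proposition 4 (dependent version) gives limits $C_W^{(1)}$, $C_W^{(2)}$, $C_W^{(3)}$ for the filtered blocks. The resulting limit is $C_{We}(h)$, stacked from $C_W-C_W^{(a)}$ and $C_W^{(b)}-C_W^{(c)}$, which is continuous.

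The main obstacle is the non-degeneracy requirement of Definition 4, namely that $C_{We}(h^*)\neq 0$ for some $h^*$. Cancellation between $C_W$ and the filtered cross-correlation is conceivable. To rule it out, I would represent the filtered correlations as convolutions: $C_W^{(a)}(h)=\int_0^\infty k(s)\,C_W(h-s)\,ds$, where $k$ is the impulse response of $W(s)$ and the sign of $h$ is chosen appropriately. With this representation, the top block's limit is $(\delta-k)*C_W$, i.e. $C_W$ filtered by $1-W(s)$ in the lag variable. Because $W$ is strictly proper, $1-W(s)$ is biproper and not identically zero. Its zeros are isolated, so it cannot annihilate a nonzero continuous correlation function that has a Fourier/spectral representation. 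Since $C_W\not\equiv 0$, $C_{We}\not\equiv 0$ follows.

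If this spectral argument is too delicate for general bounded signals, I would fall back on arguing via uniqueness of the solution of the convolution equation $C=k*C$ for strictly proper stable $W$. That equation has only the trivial bounded solution whenever $1-W(\mathrm{j}\omega)\neq 0$ on the relevant spectral support. I would state this nondegeneracy condition explicitly as part of the argument if it cannot be derived unconditionally.
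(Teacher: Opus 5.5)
Your part a), and the existence of the limit in part b), follow the route the paper indicates. You reduce $\phi^{\mathrm{ind}}$, $\phi^{\mathrm{dep}}$ and $\tilde w=w-w_f$ blockwise and componentwise to Assumption 2 and Proposition 4. Your $O(|h|/T)$ symmetry argument for the block $\hat\varphi^{\mathrm{ind}}\cdot w_f$ is correct.

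The genuine gap is the nondegeneracy step in b). Your lag-domain formula is right: the top block of the limit is $(\delta-k)*C_W$. But the claim that $1-W(s)$, having only isolated zeros, ``cannot annihilate'' a nonzero correlation function is false. Isolated zeros of $1-W(\mathrm{j}\omega)$ annihilate spectral \emph{lines} located there. Line spectra (multiharmonic signals, constant biases) are exactly the paper's setting. Your argument only works if the cross-spectral measure of $\hat\varphi^{\mathrm{dep}}$ and $w$ has no atoms at those zeros.

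In fact b) cannot be proved unconditionally. Take the following data:
\begin{itemize}
\item $W(s)=1/(s+1)$, so $W(0)=1$;
\item $n=2$, $m=1$, $\mathcal{L}_1=e_1$, $\mathcal{L}_2=e_2$;
\item $\hat\varphi(t)=[\sin t,\ 1+\sin 2t]^\top$;
\item a constant perturbation $w\equiv c\neq 0$ on $[t_0,\infty)$.
\end{itemize}
Assumption 1 holds with $\Lambda=\mathrm{diag}(1/2,3/2)$, and Assumption 2 holds with $C_W\equiv c$. However, $w_f(t)=c(1-e^{-(t-t_0)})$, so $\tilde w(t)=ce^{-(t-t_0)}\in L_1$. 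Every windowed correlation of $\phi^{\mathrm{dep}}$ with $\tilde w$ is therefore bounded by $\|\phi^{\mathrm{dep}}\|_\infty|c|/T$, uniformly in $t$ and $h$. Hence $\phi^{\mathrm{dep}}\perp\tilde w$, which excludes $\phi^{\mathrm{dep}}\parallel\tilde w$. In your notation, $(\delta-k)*c=c(1-W(0))=0$.

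So the extra hypothesis in your fallback must be added as an assumption; it cannot be derived. One way to state it: the cross-correlation of $\hat\varphi^{\mathrm{dep}}$ and $w$ is not carried by $\{\omega:W(\mathrm{j}\omega)=1\}$, i.e. $(\delta-k)*C_W\not\equiv 0$.

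This gap is not specific to your write-up. A blockwise appeal to Proposition 4 only guarantees that each block has a nonzero limit separately. It does not exclude cancellation in differences such as $C_{\hat\varphi^{\mathrm{dep}}w}-C_{\hat\varphi^{\mathrm{dep}}w_f}$. The degenerate case is benign for the estimation problem itself, since the whole perturbation then averages out, but it does falsify b) as stated.
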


\begin{proof}Proof is presented in Supplementary material \cite{27}.\end{proof}

Therefore, being motivated by the results from Subsection B, adoption of Assumptions 1, 2, and 4 allows one to introduce the following estimation law for the original parameters, which is based on equation \eqref{29} with an extended vector of parameters:
\begin{equation}\label{30}
\begin{array}{l}
{\dot {\hat \theta}_i}\left( t \right)\!=\!\Gamma_i\left(t\right)\left({{{\cal M}_2}\left( t \right)\Phi \left( t \right)\mathcal{D}\hat \theta\left(t\right)\!-\!\Upsilon \left( t \right)} \right){\rm{,}}\\
\Gamma_i\left(t\right)=- \gamma {{\cal M}_2}\left( t \right)\!{{E}}_i^{\top}{\rm{adj}}\left\{ {\Phi \left( t \right)}\right\}{\rm{,}}
\end{array}
\end{equation}
where
\begin{equation}\label{31}
\begin{array}{c}
\Upsilon \left( t \right) = {{\cal M}_2}\left( t \right)Y\left( t \right) - {\cal N}\left( t \right){\rm{,}}\\
{\cal N}\left( t \right){\rm{:}} = {{\cal L}_{2e}}{\rm{adj}}\left\{ {{\cal H}_e^{\top}{\rm{adj}}\left\{ {\Phi \left( t \right)} \right\}{{\cal L}_{2e}}} \right\}{{\cal H}_e^{\top}}{\cal Y}\left( t \right){\rm{,}}\\
\!{{\cal M}_1}\left( t \right){\rm{:}}\!=\!{{\cal L}_{2e}}{\rm{adj}}\left\{ {{\cal H}_e^{\top}{\rm{adj}}\left\{ {\Phi} \right\}{{\cal L}_{2e}}} \right\}{\cal H}_e^{\top}{\rm{adj}}\left\{ {\Phi \left( t \right)} \right\}{\rm{,}}\!\\
{{\cal M}_2}\left( t \right){\rm{:}} = {\rm{det}}\left\{ { {{\cal H}_e^{\top}{\rm{adj}}\left\{ {\Phi \left( t \right)} \right\}{{\cal L}_{2e}}}} \right\}{\rm{,}}\\
{\cal Y}\left( t \right){\rm{:}} = {\rm{adj}}\left\{ {\Phi \left( t \right)} \right\}Y\left( t \right){\rm{,\;}}\Delta \left( t \right){\rm{:}} = {\rm{det}}\left\{ {\Phi \left( t \right)} \right\}{\rm{,}}
\end{array}
\end{equation}
\begin{equation}\label{32}
\begin{array}{l}
\dot Y\left( t \right) = {\textstyle{1 \over T}}\left[ {\phi \left( t \right)\tilde z\left( t \right) - \phi \left( {t - T} \right)\tilde z\left( {t - T} \right)} \right]{\rm{,\;}}\\
\dot \Phi \left( t \right) = {\textstyle{1 \over T}}\left[ {\phi \left( t \right){\phi ^{\top}}\left( t \right) - \phi \left( {t - T} \right){\phi ^{\top}}\left( {t - T} \right)} \right]{\rm{,\;}}
\end{array}
\end{equation}
and $\hat \theta \left( {{t_0}} \right) = {\hat \theta _0}$, $Y\left( {{t_0}} \right) = {0_{2n}}{\rm{,}}$ $\Phi \left( {{t_0}} \right) = {0_{2n \times 2n}}{\rm{,}}$ $\gamma > 0$, $E_i^{\top} = e_i^{\top}{\begin{bmatrix}
{{I_{n \times n}}}&{{0_{n \times n}}}
\end{bmatrix}}.$

The conditions, under which such an estimation law ensures achievement of the stated goal, are given in the following theorem.
\begin{theorem}
    Let $2m \ge n$, Assumptions 1, 2 and 4 are met and, additionally, for {$\Lambda_e\in\mathbb{R}^{2n\times 2n}$ such that}
\begin{equation*} {
\begin{array}{l}
\mathop {{\rm{lim}}}\limits_{T \to \infty } \mathop {{\rm{sup}}}\limits_{t \geqslant {t_0}} \left\| {\frac{1}{T}\int\limits_t^{t + T} {\phi \left( \tau  \right){\phi ^ \top }\left( \tau  \right)} d\tau  - {\Lambda _e}} \right\|=0
\end{array}}
\end{equation*}
and defined ${{\cal H}_e} \in {\mathbb{R}^{2n \times 2\left( {n - m} \right)}}{\rm{,\;}}{{\cal L}_{2e}} \in {\mathbb{R}^{2n \times 2\left( {n - m} \right)}}$ (see Propositions 5 and 6) it holds that 

\begin{equation*}{\rm{rank}}\left\{ {{\cal H}_e^{\top}{\rm{adj}}\left\{ {{\Lambda _e}} \right\}{{\cal L}_{2e}}} \right\} = 2\left( {n - m} \right).\end{equation*}

Then the law \eqref{30} ensures achievement of the stated goal \eqref{5}.
\end{theorem}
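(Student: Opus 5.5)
The plan is to reduce Theorem 4 to the argument behind Theorem 3, applied to the extended regression equation \eqref{29} with parameter $\Theta=\mathcal{D}\theta$.

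\textbf{Step 1: extended regression and its decomposition.} Integrating $\phi\tilde z$ over the sliding window via \eqref{32} gives $Y(t)=\Phi(t)\mathcal{D}\theta+W(t)$, where $W$ is the windowed average of $\phi\tilde w$. Since $\mathcal{L}_{1e}\mathcal{L}_{1e}^{\top}+\mathcal{L}_{2e}\mathcal{L}_{2e}^{\top}=I_{2n}$ follows from \eqref{6}, we can split $W=W_1+W_2$ exactly as in \eqref{15}.

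\textbf{Step 2: annihilate $W_2$.} Multiply by $\mathrm{adj}\{\Phi\}$. Proposition 5 gives $\mathcal{H}_e^{\top}\mathcal{D}=0$, so the term $\mathcal{H}_e^{\top}\Delta\mathcal{D}\theta$ vanishes. Hence $\mathcal{H}_e^{\top}\mathcal{Y}=\mathcal{H}_e^{\top}\mathrm{adj}\{\Phi\}(W_1+W_2)$. Multiplying by $\mathcal{L}_{2e}\mathrm{adj}\{\mathcal{H}_e^{\top}\mathrm{adj}\{\Phi\}\mathcal{L}_{2e}\}$ and using $\mathrm{adj}\{A\}A=\det\{A\}I$ yields $\mathcal{N}=\mathcal{M}_1W_1+\mathcal{M}_2W_2$. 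Subtracting this from $\mathcal{M}_2Y$ gives
\[
\Upsilon(t)=\mathcal{M}_2(t)\Phi(t)\mathcal{D}\theta+\mathcal{M}(t)W_1(t),
\]
with $\mathcal{M}=\mathcal{M}_2I_{2n}-\mathcal{M}_1$ bounded by Assumption 1.

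\textbf{Step 3: error dynamics.} Substituting into \eqref{30} gives, for each $i$,
\[
\dot{\tilde\theta}_i=-\gamma\mathcal{M}_2E_i^{\top}\mathrm{adj}\{\Phi\}\bigl(\mathcal{M}_2\Phi\mathcal{D}\tilde\theta-\mathcal{M}W_1\bigr).
\]
Since $E_i^{\top}\mathrm{adj}\{\Phi\}\Phi\mathcal{D}=\Delta e_i^{\top}[I_n\ 0]\mathcal{D}=\Delta e_i^{\top}$, this becomes the decoupled scalar equation
\[
\dot{\tilde\theta}_i=-\gamma\mathcal{M}_2^2\Delta\,\tilde\theta_i+\gamma\mathcal{M}_2E_i^{\top}\mathrm{adj}\{\Phi\}\mathcal{M}W_1 .
\]
The regression step for $\tilde\theta$ thus has the DREM form used in Theorems 1 and 3.

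\textbf{Step 4: uniform excitation and vanishing perturbation (main obstacle).} We need $\mathcal{M}_2^2\Delta\ge\mu>0$ uniformly for $t\ge t_0+T$ once $T$ is large, and $\sup_{t\ge t_0+T}\|W_1(t)\|\to0$ as $T\to\infty$.

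For the perturbation, Proposition 6(a) with Proposition 1 gives $\mathcal{L}_{1e}^{\top}W\to0$ uniformly, hence $W_1\to0$.

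For the excitation, Assumption 4 with Propositions 1 and 2 gives $\Phi\to\Lambda_e>0$ uniformly, so $\Delta\ge\tfrac12\det\{\Lambda_e\}$ for large $T$. Determinants and adjugates are polynomial, hence locally Lipschitz, in the entries. Therefore $\mathcal{M}_2\to\det\{\mathcal{H}_e^{\top}\mathrm{adj}\{\Lambda_e\}\mathcal{L}_{2e}\}$ uniformly. This limit is nonzero by the rank hypothesis, because the matrix $\mathcal{H}_e^{\top}\mathrm{adj}\{\Lambda_e\}\mathcal{L}_{2e}$ is square of size $2(n-m)$. Care is needed to make all these bounds uniform in $t$ on bounded sets; the window-length argument gives that uniformity.

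\textbf{Step 5: comparison lemma and conclusion.} Apply the comparison lemma to $\tilde\theta_i^2$ for $t\ge t_0+T$. This gives
\[
|\tilde\theta_i(t)|\le e^{-\mu(t-t_0-T)}|\tilde\theta_i(t_0+T)|+c\sup\|W_1\|,
\]
where $c$ depends only on bounds determined by $\Lambda_e$. On $[t_0,t_0+T]$ the error stays bounded by linearity and bounded coefficients. Choosing $\delta$ so that $c\sup\|W_1\|\le\varepsilon$ for all $T>\delta$ gives \eqref{5}.
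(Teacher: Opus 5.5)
Your proposal is correct and follows essentially the same route as the paper's proof. You treat the extended equation \eqref{29} as a Scheme~1 problem for $\Theta=\mathcal{D}\theta$, cancel $W_2$ via Propositions~5 and~6, and obtain the decoupled error equation $\dot{\tilde\theta}_i=-\gamma\mathcal{M}_2^2\Delta\tilde\theta_i+\gamma\mathcal{M}_2E_i^{\top}\mathrm{adj}\{\Phi\}\mathcal{M}W_1$, then conclude from the uniform lower bound on $\mathcal{M}_2^2\Delta$ (via $\Phi\to\Lambda_e>0$ and the rank condition), the vanishing of $\sup_{t\ge t_0+T}\|W_1(t)\|$ as $T\to\infty$, and a Lyapunov/comparison estimate.
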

\begin{proof}Proof is postponed to Supplementary material \cite{27}.\end{proof}

In comparison with \eqref{10}, the estimation law \eqref{30} ensures that the goal \eqref{5} is achieved in case ${\textstyle{n \over 2}} \le m < n$ rather than $m=n$, while in comparison with \eqref{23}, instead of a restrictive condition ${{\cal H}^{\top}}\theta  = {0_{n - m}}$, it requires to meet Assumption 4 about propagation of the regressor persistent excitation. In authors’ opinion, Assumption 4 is less restrictive than Assumption 3 and very often fulfilled in the practical scenarios.

\textbf{Example 3.} Consider a linear closed-loop dynamical system 
\begin{equation*}
\begin{array}{l}
y\left( t \right) = \frac{{Z\left( {\theta {\rm{,\;}}s} \right)}}{{R\left( {\theta {\rm{,\; }}s} \right)}}\left[ {u} \right]\left( t \right)+f\left( t \right){\rm{,}}\\
\hat{y}\left(t\right)=y\left(t\right)+\eta_{y}\left(t\right){\rm{,}}
\end{array}
\end{equation*}
where 
\begin{equation*}
\begin{array}{l}
Z\left( {\theta {\rm{,\;}}s} \right) = {b_{q}}{s^{q}} + {b_{q - 1}}{s^{q - 1}} +  \ldots  + {b_0}{\rm{,}}\\
R\left( {\theta {\rm{,\;}}s} \right) = {s^p} + {a_{p - 1}}{s^{p - 1}} +  \ldots  + {a_0}{\rm{,}}
\end{array}
\end{equation*}
and $y\left(t\right)$ is a system output, $\hat{y}\left(t\right)$ is an output available for measurement, $u\left(t\right)$ is an input (reference value), $f\left(t\right)$ is an external perturbation, $\eta_y\left(t\right)$ is a measurement noise. Without loss of generality, we assume that the degree of $Z\left(\theta,\;s\right)$ is {\emph{unknown}} and consider the worst case {$q = p-1$}. If $Z\left( {\theta {\rm{,\;}}s} \right)$ is of known degree $q < p - 1$, then the coefficients ${b_i}{\rm{,\;}}i =\linebreak = p - 1, \ldots {{,}}{q +1}$ are simply equaled to zero.

Then the signals $\hat{z}\left( t \right)$ and $\hat{\varphi}\left( t \right)$ are obtained as follows \cite{1,2}:
\begin{equation*}
\begin{array}{l}
\hat{z}\left( t \right) {\rm{:}} = {\frac{{{s^{{p}}}}}{{\delta \left( s \right)}}}\left[\hat{y}\right]\left( t \right) = {\frac{{{s^{{p}}}}}{{\delta \left( s \right)}}}\left[y\right]\left( t \right)+\underbrace {{\textstyle{{{s^p}} \over {\delta \left( s \right)}}}\left[ {{\eta _y}} \right]\left( t \right)}_{{\eta _z}\left( t \right)}\\\hfill\quad= \underbrace { {\textstyle{{{s^{{p}}}} \over {\delta \left( s \right)}}}\left[y\right]\left( t \right) - {\textstyle{\textstyle{{{R\left( {\theta {\rm{,\;}}s} \right)}}}\over {\delta \left( s \right)}}}\left[ f \right]\left( t \right)}_{{z}\left( t \right)}+\eta _z\left( t \right)+\underbrace { {\textstyle{\textstyle{{{R\left( {\theta {\rm{,\;}}s} \right)}}}\over {\delta \left( s \right)}}}\left[ f \right]\left( t \right)}_{{\eta _f}\left( t \right)}{\rm{,}}\\
\hat{\varphi} \left( t \right) {\rm{:}} = {{\begin{bmatrix}
{ - {\textstyle{{\lambda _{{p} - 1}^{\top}\left( s \right)} \over {\delta \left( s \right)}}}\left[\hat{y}\right]\left( t \right)}&{{\textstyle{{\lambda _{{p} - 1}^{\top}\left( s \right)} \over {\delta \left( s \right)}}}\left[u\right]\left( t \right)}
\end{bmatrix}}^{\top}}=\\
\hfill=\underbrace{{{\begin{bmatrix}
{ - {\textstyle{{\lambda _{{p} - 1}^{\top}\left( s \right)} \over {\delta \left( s \right)}}}\left[y\right]\left( t \right)}&{{\textstyle{{\lambda _{{p} - 1}^{\top}\left( s \right)} \over {\delta \left( s \right)}}}\left[u\right]\left( t \right)}
\end{bmatrix}}^{\top}}}_{\varphi\left(t\right)}+\\\hfill+
\underbrace {{{{\begin{bmatrix}
{ - {\textstyle{{\lambda _{p - 1}^ \top \left( s \right)} \over {\delta \left( s \right)}}}\left[ {{\eta _y}} \right]\left( t \right)}&{{0_{1 \times p}}}
\end{bmatrix}}}^ \top }}_{{\eta _\varphi }\left( t \right)}{\rm{, }}\\
\end{array}
\end{equation*}
where
$\lambda _{{p} - 1}^{\top}\left( s \right) = {\begin{bmatrix}
{{s^{{p} - 1}}}& \cdots &s&1
\end{bmatrix}}$ and $\delta \left( s \right)$ denotes a monic Hurwitz polynomial of order ${p}$.

As $u\left(t\right)$ is an external reference, which generally independent from the measurement noise $\eta_{y}\left(t\right)$ and perturbation $f\left(t\right)$, then Assumption 2 is always met for linear closed loop dynamical system with the following matrices:
\begin{equation*}
{\mathcal{L}}^{\top}_{1}= {\begin{bmatrix}
{{0_{p \times p}}}&{{I_{p \times p}}}
\end{bmatrix}},\;\\{\mathcal{L}}^{\top}_{2}= {\begin{bmatrix}
{{I_{p \times p}}}&{{0_{p \times p}}}
\end{bmatrix}}.
\end{equation*}

Therefore, the  necessary condition $2m \ge n$ for convergence from Theorem 4 is also always satisfied ($m=p$ and $n=2p$). Moreover, using the results from \cite{Boyd} for the considered linear dynamical system, it seems to be possible to derive the constructive necessary and sufficient conditions to verify Assumption 3. Consequently, the proposed estimator \eqref{30}, \eqref{31}, \eqref{32} can be applied for parameter estimation of linear systems affected by perturbations and noise.$\hfill\blacksquare$

\section{Application to GPEBO Design \\ in Presence of Noise}

It is well-known \cite{3, 26} that the state estimation problem for a large class of systems can be transformed to the task of state reconstruction of the following state-affine nonlinear system
\begin{equation}\label{36}
\begin{array}{l}
\dot x\left( t \right) = {\mathbf{A}}\left( {u{\rm{,\;}}y{\rm{,\;}}t} \right)x\left( t \right) + {\bf{b}}\left( {u{\rm{,\;}}y{\rm{,\;}}t} \right){\rm{,\;}}x\left( {{t_0}} \right) = {x_0}{\rm{,}}\\
y\left( t \right) = {\mathbf{C}}\left( {u{\rm{,\;}}t} \right)x\left( t \right){\rm{,}}\\
\hat y\left( t \right) = y\left( t \right) + w\left( t \right){\rm{,}}
\end{array}
\end{equation}
where $x\left( t \right) \in {\mathbb{R}^n}$ is an unmeasured state, $u\left( t \right) \in \mathbb{R}$ denotes an input signal, $y\left( t \right) \in \mathbb{R}$ is a system noise-free output, $\hat y\left( t \right) \in \mathbb{R}$ is a measured output, which is corrupted by the bounded noise $w\left( t \right) \in \mathbb{R}$ and 
\begin{equation*}
A\left( t \right){\rm{:}} = {\mathbf{A}}\left( {u{\rm{,\;}}y{\rm{,\;}}t} \right){\rm{,\;}}b\left( t \right){\rm{:}} = {\rm{ }}{\mathbf{b}}\left( {u{\rm{,\;}}y{\rm{,\;}}t} \right){\rm{,\;}}C\left( t \right){\rm{:}} = {\mathbf{C}}\left( {u{\rm{,\;}}t} \right)
\end{equation*}
are known mappings.

The recently proposed \cite{4} GPEBO for the system \eqref{36} uses the noise-free output $y\left( t \right)$ that is generally unavailable for practical scenarios and includes dynamics
\begin{equation}\label{37}
\begin{array}{l}
\dot \xi \left( t \right) = A\left( t \right)\xi \left( t \right) + b\left( t \right){\rm{,\;}}\xi \left( {{t_0}} \right) = {0_n}{\rm{,}}\\
{{\dot \Phi }_A}\left( {t{\rm{,\;}}{t_0}} \right) = A\left( t \right){\Phi _A}\left( {t{\rm{,\;}}{t_0}} \right){\rm{,\;}}{\Phi _A}\left( {{t_0}{\rm{,\;}}{t_0}} \right) = {I_n}{\rm{,}}
\end{array}
\end{equation}
which provides that the state of the system \eqref{36} is obtained as
\begin{equation}\label{38}
x\left( t \right) = \xi \left( t \right) + {\Phi _A}\left( t \right)\theta {\rm{,}}
\end{equation}
where the unknown parameters $\theta  = {\xi _0}$ satisfy the LRE
\begin{equation}\label{39}
z\left( t \right) = {\varphi ^{\top}}\left( t \right)\theta
\end{equation}
with
\begin{equation*}
\begin{array}{c}
z\left( t \right){\rm{:}} = y\left( t \right) - C\left( t \right)\xi \left( t \right){\rm{,}}\\
{\varphi ^{\top}}\left( t \right){\rm{:}} = C\left( t \right){\Phi _A}\left( {t{\rm{,\;}}{t_0}} \right){\rm{.}}
\end{array}
\end{equation*}

Based on equations \eqref{38}, \eqref{39}, the GPEBO provides the following state observer
\begin{subequations}
\begin{equation}\label{40a}
\hat x\left( t \right) = \xi \left( t \right) + {\Phi _A}\left( {t{\rm{,\;}}{t_0}} \right)\hat \theta \left( t \right){\rm{,}}
\end{equation}
where the estimate $\hat \theta \left( t \right)$ is obtained via G + D estimator
\begin{equation}\label{40b}
\begin{array}{l}
\dot {\hat \theta}\!\left( t \right)\!=\!- \gamma {\rm{adj}}\left\{ {{I_n}\!-\!\Phi\!\left( t \right)} \right\}\left( {\left( {{I_n}\!-\!\Phi\!\left( t \right)} \right)\hat \theta \!\left( t \right)\!-\!Y\!\left( t \right)} \right)\!{\rm{,}}\\
\dot Y\left( t \right) = \Gamma \varphi \left( t \right)\left( {z\left( t \right) - {\varphi ^{\top}}\left( t \right)Y\left( t \right)} \right){\rm{,\;}}Y\left( {{t_0}} \right) = {0_n}{\rm{,}}\\
\dot \Phi \left( t \right) =  - \Gamma \varphi \left( t \right){\varphi ^{\top}}\left( t \right)\Phi \left( t \right){\rm{,\;}}\Phi \left( {{t_0}} \right) = {I_n}{\rm{,}}
\end{array}
\end{equation}
\end{subequations}
and $\hat \theta \left( {{t_0}} \right) = {{\hat \theta }_0}$, $\Gamma  = {\Gamma ^{\top}} > 0,{\rm{\;}}\gamma  > {\rm{0}}{\rm{.}}$

Recently it has been shown \cite{4} that the GPEBO with G+D interlaced parameter estimator ensures globally exponentially stable (GES) observation of the system \eqref{36} state iff the system \eqref{36} is observable. However, in the presence of measurement noise, the dynamics \eqref{37} cannot be implemented, the regressor $\varphi \left( t \right)$ and regressand $z\left( t \right)$ are unmeasurable and, therefore, GPEBO augmented with G+D estimator provides only convergence of the state estimation error to some residual set with a bound, which cannot be reduced arbitrarily. Therefore, the actual task in the direction of GPEBO improvement is stated as follows.

\textbf{Observation Problem in Presence of Noise:} Using the output $\hat y\left( t \right)$ corrupted by noise, the goal is to design a state observer such that, for any ${\varepsilon _x} > 0$, the ball
\begin{equation*}
{{\cal E}_x}{\rm{:}} = \left\{ {\tilde x \in {\mathbb{R}^n}{\rm{:\;}}\left\| {\tilde x} \right\| \le {\varepsilon _x}} \right\}
\end{equation*}
is GES, \emph{i.e.} for any ${\varepsilon _x} > 0$ and every ${\rho _x} > 0$, there exist $m\left( {{t_0}{\rm{,\;}}{\rho _x}} \right) > 0$ and $\lambda \left( {{t_0}{\rm{,\;}}{\rho _x}} \right) > 0$ such that, if $\left\| {\tilde x\left( {{t_0}} \right)} \right\| \le {\rho _x}$, then the state estimation error $\tilde x\left( t \right) = \hat x\left( t \right) - x\left( t \right)$ for all $t \ge {t_0}$ satisfies the bound
 \begin{equation*}
\quad\quad\quad\left\| {\tilde x\left( t \right)} \right\| \le m\left( {{t_0}{\rm{,\;}}{\rho _x}} \right){e^{ - \lambda \left( {{t_0}{\rm{,\;}}{\rho _x}} \right)\left( {t - {t_0}} \right)}}\left\| {\tilde x\left( {{t_0}} \right)} \right\| + {\varepsilon _x}.\quad\blacksquare
\end{equation*}

From the direct inspection of the GPEBO design procedure \eqref{37}-\eqref{40a} and the results of Theorems 1-4, it is obvious that the above-mentioned observation problem can be solved via simple application, instead of G+D estimator \eqref{40b}, of any of the proposed estimators \eqref{10}, \eqref{23}, \eqref{30} with properly defined assumptions with respect to the system observability Gramian, measurement noise and initial conditions. In this study, in order to satisfy space limitations, we combine the GPEBO observer \eqref{40a} only with a very simple estimation law \eqref{10} and, therefore, introduce the following assumptions.

\begin{assumption} The arguments of the system \eqref{36} matrices are input and time only, i.e.
\begin{equation*}
{\textstyle{{\partial {\mathbf{A}}\left( {u{\rm{,\;}}y{\rm{,\;}}t} \right)} \over {\partial y}}} = 0,{\rm{\;}}{\textstyle{{\partial {\mathbf{b}}\left( {u{\rm{,\;}}y{\rm{,\;}}t} \right)} \over {\partial y}}} = 0{\rm{\;for\;all\;}}u\left( t \right) \in \mathbb{R}{\rm{\;and\;}}t \ge {t_0}
\end{equation*}
and, therefore, ${\mathbf{A}}\left( {u{\rm{,\;}}y{\rm{,\;}}t} \right){\rm{:}}\!=\!{\mathbf{A}}\left( {u{\rm{,\;}}t} \right){\rm{,\;}}{\mathbf{b}}\left( {u{\rm{,\;}}y{\rm{,\;}}t} \right){\rm{:}}\!=\!{\mathbf{b}}\left( {u{\rm{,\;}}t} \right)$. 
\end{assumption}
\begin{assumption} The state transition matrix of the homogeneous part of the system \eqref{36} for some ${\overline \Phi _A} > 0$ satisfies
\begin{equation*} 
\left\| {{\Phi _A}\left( {t{\rm{,\;}}\tau } \right)} \right\| \le {\overline \Phi _A}{\rm{\;for\;all\;}}t \ge \tau  \ge {t_0}.
\end{equation*}
\end{assumption}

\begin{assumption} For the observability Gramian 
\begin{equation*}
\begin{array}{l}
{\cal O}\left( {{t_1}{\rm{,\;}}{t_2}} \right){\rm{:}} = \int\limits_{{t_1}}^{{t_2}} {\varphi \left( \tau  \right){\varphi ^{\top}}\left( \tau  \right)d\tau },
\end{array}
\end{equation*}
the following conditions are met:
\begin{enumerate}
\item[{\textbf{C1)}}] there exist $\overline{\alpha}  \ge  \underline{\alpha}> 0,{\rm{\;}}T > 0$ such that 
\begin{equation*}
\overline{\alpha} I_{n} \ge {\cal O}\left( {t{\rm{,\;}}t + T} \right) \ge \underline{\alpha}I_{n}    
\end{equation*}
  for all $t \ge {t_0}$, i.e. $\varphi  \in {\rm{PE}}$ (the system \eqref{36} is Uniformly Completely Observable (UCO)),
\item[{\textbf{C2)}}] there exists $\Lambda  \in {\mathbb{R}^{n \times n}}$, such that {\begin{equation*}
\mathop {{\rm{lim}}}\limits_{T \to \infty } \mathop {{\rm{sup}}}\limits_{t \geqslant {t_0}} \frac{1}{T}\left\| {{\cal{O}}\left( {t,\;t + T} \right) - \Lambda } \right\| = 0,
\end{equation*}} i.e. $\varphi  \in {\rm{ST}}$.  
\end{enumerate}
\end{assumption}
\begin{assumption}$\varphi  \bot w.$ 
\end{assumption}

If Assumption 5 is met, then the signals $\xi \left( t \right)$ and ${\Phi _A}\left( {t{\rm{,\;}}{t_0}} \right)$ are both measurable for all $t \ge {t_0}$, and, therefore, the LRE \eqref{3} with the following regressor and regressand can be obtained:
\begin{equation}\label{41}
\begin{array}{l}
{{\hat \varphi }^{\top}}\left( t \right){\rm{:}} = {\varphi ^{\top}}\left( t \right) = C\left( t \right){\Phi _A}\left( {t{\rm{,\;}}{t_0}} \right){\rm{,}}\\
\hat z\left( t \right){\rm{:}} = \hat y\left( t \right) - C\left( t \right)\xi \left( t \right).
\end{array}
\end{equation}

Consequently, as Assumptions 7 and 8 are met, then the estimation law \eqref{10} can be applied to estimate the unknown initial conditions $\theta$, which are required for application of observer \eqref{40a}.
\begin{theorem}
    The following state observer provides a solution to the stated Observation Problem in Presence of Noise
\begin{equation}\label{42}
\hat x\left( t \right) = \xi \left( t \right) + {\Phi _A}\left( {t{\rm{,\;}}{t_0}} \right)\hat \theta \left( t \right){\rm{,}}
\end{equation}
\begin{equation}\label{43}
\begin{array}{l}
\dot {\hat \theta} \left( t \right) =  - \gamma {\rm{adj}}\left\{ {\Phi \left( t \right)} \right\}\left( {\Phi \left( t \right)\hat \theta \left( t \right) - Y\left( t \right)} \right){\rm{,}}\\
\dot Y\left( t \right) = {\textstyle{1 \over T}}\left[ {\hat \varphi \left( t \right)\hat z\left( t \right) - \hat \varphi \left( {t - T} \right)\hat z\left( {t - T} \right)} \right]{\rm{,\;}}\\
\dot \Phi \left( t \right) = {\textstyle{1 \over T}}\left[ {\hat \varphi \left( t \right){{\hat \varphi }^{\top}}\left( t \right) - \hat \varphi \left( {t - T} \right){{\hat \varphi }^{\top}}\left( {t - T} \right)} \right]{\rm{,}}
\end{array}
\end{equation}	
where $\hat \theta \left( {{t_0}} \right) = {{\hat \theta }_0}{\rm{,}}$ $Y\left( {{t_0}} \right) = {0_n}{\rm{,}}$ $\Phi \left( {{t_0}} \right) = {0_{n \times n}}{\rm{,}}$ $\gamma  > 0$ and $T > 0$.
\end{theorem}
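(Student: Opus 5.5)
The plan is to reduce Theorem 5 to the setting of Theorem 1 and then transfer the parametric bound to the state error through \eqref{38} and \eqref{42}.

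\textbf{Step 1 (error equation).} Under Assumption 5, $A$ and $b$ depend only on $u$ and $t$. Hence $\xi$ and $\Phi_A(t,t_0)$ in \eqref{37} are computable without $y$, and \eqref{41} gives the LRE \eqref{3} with $\hat\varphi=\varphi$ and perturbation $w$, i.e. $\eta_\varphi\equiv 0$. Subtracting \eqref{38} from \eqref{42} yields
\[
\tilde x(t)=\Phi_A(t,t_0)\tilde\theta(t).
\]
By Assumption 6, $\|\tilde x(t)\|\le\overline\Phi_A\|\tilde\theta(t)\|$. Conversely, $\tilde x(t_0)=\tilde\theta(t_0)$, since $\Phi_A(t_0,t_0)=I_n$ and $\xi(t_0)=0_n$. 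So it suffices to prove an exponential-plus-$\varepsilon$ bound on $\tilde\theta$ with $\varepsilon=\varepsilon_x/\overline\Phi_A$.

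\textbf{Step 2 (verify the hypotheses of Theorem 1 with $m=n$).} C1 and C2 of Assumption 7 give $\hat\varphi\in{\rm ST}\cap{\rm PE}$. Boundedness of $\varphi^\top=C\Phi_A$ follows from Assumption 6 together with boundedness of $C(u,t)$; I will take the latter as implicit, or else add it explicitly. Boundedness of $w$ is part of the problem data. Assumption 8 is exactly Assumption 2 with $\mathcal L_1=I_n$ and $m=n$. Proposition 1 then gives
\[
\sup_{t\ge t_0+T}\|\Phi(t)-\Lambda\|\to 0,\qquad \sup_{t\ge t_0+T}\|W(t)\|\to 0\quad\text{as } T\to\infty,
\]
and $\Lambda>0$ by Proposition 2.

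\textbf{Step 3 (explicit GES-type bound).} Write $\Phi\hat\theta-Y=\Phi\tilde\theta-W$. Then
\[
\dot{\tilde\theta}=-\gamma\Delta(t)\tilde\theta+\gamma\,{\rm adj}\{\Phi\}W,\qquad \Delta=\det\Phi,
\]
so each element obeys a scalar stable equation once $\Delta$ is bounded away from zero. For $T$ large, $\Delta(t)\ge\tfrac12\det\Lambda=:\underline\Delta$ for $t\ge t_0+T$. On $[t_0,t_0+T]$ we have $\Delta\ge0$, since $\Phi\ge0$ is a Gram matrix, so $|\tilde\theta_i|$ cannot grow beyond the bounded forcing contribution $\gamma T\sup\|{\rm adj}\Phi\|\sup\|W\|$. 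Comparison then gives, for $t\ge t_0+T$,
\[
|\tilde\theta_i(t)|\le e^{-\gamma\underline\Delta(t-t_0-T)}|\tilde\theta_i(t_0+T)|+\frac{\sup\|{\rm adj}\Phi\|}{\underline\Delta}\sup_{\tau\ge t_0+T}\|W(\tau)\|.
\]
Choosing $T$ large makes the last term at most $\varepsilon_x/(2\overline\Phi_A\sqrt n)$. Combining both intervals yields $\|\tilde x(t)\|\le m e^{-\lambda(t-t_0)}\|\tilde x(t_0)\|+\varepsilon_x$ with $\lambda=\gamma\underline\Delta$, where $m$ absorbs $e^{\lambda T}$ and $\overline\Phi_A$.

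\textbf{Main obstacle.} The hard step is the initial window $[t_0,t_0+T]$, where $\Phi$ is not yet close to $\Lambda$. There the transient bound must have the multiplicative form $m\|\tilde x(t_0)\|$ plus an additive term that is absorbed into $\varepsilon_x$. The additive contribution from $W$ over this window is not small; it is only bounded by a quantity of order $T$. I would handle it by deriving the bound directly from $\dot{\tilde\theta}$ and noting it is decayed by $e^{-\lambda(t-t_0-T)}$ afterwards. The difficulty is that it is not proportional to $\|\tilde x(t_0)\|$, so it cannot go into $m(t_0,\rho_x)\|\tilde x(t_0)\|$ for arbitrary initial errors. To absorb it, I would use the $\rho_x$-dependence permitted in the definition: set $m=m(t_0,\rho_x)$ large enough that $m\rho_x$ dominates this window contribution.
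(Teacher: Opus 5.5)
\textbf{Gap in the final step.} Your reduction is the route the paper indicates: Assumptions 7--8 serve as Assumptions 1--2 with $\mathcal{L}_1=I_n$, you use Theorem 1's error dynamics $\dot{\tilde\theta}=-\gamma\Delta\tilde\theta+\gamma\,{\rm adj}\{\Phi\}W$, and you transfer via $\tilde x=\Phi_A\tilde\theta$ and Assumption 6. Steps 1--2 and your estimate for $t\ge t_0+T$ are fine. The problem is the last step, where you absorb the initial-window contribution into $m(t_0,\rho_x)$.

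The required bound contains $m\|\tilde x(t_0)\|$, not $m\rho_x$, and it must hold for every $\|\tilde x(t_0)\|\le\rho_x$. In particular, for $\hat\theta_0=\theta$ it demands $\|\tilde x(t)\|\le\varepsilon_x$ for all $t\ge t_0$. Your estimate then leaves the term $\sqrt n\,\overline\Phi_A e^{-\lambda(t-t_0-T)}\gamma T\sup\|{\rm adj}\{\Phi\}\|\sup\|W\|$ near $t=t_0+T$. You have not made this small; you note yourself it is only $O(T)$. No choice of $m$ can absorb a term that does not vanish with $\|\tilde x(t_0)\|$, so the window contribution must be shown to tend to zero as $T\to\infty$ and placed in $\varepsilon_x$.

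\textbf{How to close it.} Keep the damping on the window instead of bounding $|\tilde\theta_i|$ by integrating the forcing alone. For $\dot e=-\gamma\Delta e+\gamma d$ with $\Delta>0$ on $[t_1,t]$ one has $|e(t)|\le e^{-\gamma\int_{t_1}^t\Delta}|e(t_1)|+\sup_{[t_1,t]}|d/\Delta|$, because $\gamma\int_{t_1}^t e^{-\gamma\int_s^t\Delta}\Delta(s)\,ds\le 1$. Here $d/\Delta=[\Phi^{-1}W]_i$, and the factor $1/T$ cancels in this ratio. For every $t\ge t_0$, with $\tau_0=\max\{t_0,t-T\}$,
$\Phi^{-1}(t)W(t)=\bigl(\int_{\tau_0}^t\varphi\varphi^\top d\tau\bigr)^{-1}\int_{\tau_0}^t\varphi w\,d\tau$.
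By C2 with $\Lambda>0$ and Assumption 8 (both uniform in the start of the window), there is $S$, independent of $T$, such that this ratio is at most $2\epsilon/\lambda_{\min}(\Lambda)$ whenever $\min\{t-t_0,T\}\ge S$.

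On $[t_0,t_0+S]$ you have $\|{\rm adj}\{\Phi\}\|\,\|W\|=O((S/T)^n)$, so the drift there is only $O(\gamma S^{n+1}/T^n)$. Combining this with $\Delta\ge\underline\Delta$ for $t\ge t_0+T$ gives, for all $t\ge t_0$,
$|\tilde\theta_i(t)|\le e^{\gamma\underline\Delta T}e^{-\gamma\underline\Delta(t-t_0)}|\tilde\theta_i(t_0)|+2\epsilon/\lambda_{\min}(\Lambda)+O(\gamma S^{n+1}/T^n)$.
Choose $\epsilon$, then $S$, then $T$ so that the additive part is at most $\varepsilon_x/(\sqrt n\,\overline\Phi_A)$. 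Then $m=\sqrt n\,\overline\Phi_A e^{\gamma\underline\Delta T}$ and $\lambda=\gamma\underline\Delta$ work, with no dependence on $\rho_x$ needed.
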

\begin{proof}Proof is presented in Supplementary material \cite{27}.\end{proof}

Consequently, in comparison with the baseline G + D based GPEBO \cite{4}, the proposed observer \eqref{42}, \eqref{43} has the following main distinguishing features: 1) the observer \eqref{42}, \eqref{43} provided GES with arbitrarily small compact set in case when the measured output signal is affected by noise, 2) the observer \eqref{42}, \eqref{43} requires to meet the UCO/PE assumption (\textbf{C1} from Assumption 7), but the G+D based GPEBO requires only the system observability (strictly weaker assumption). The first distinguishing feature of the proposed observer is its advantage and novelty, but the second feature is its serious drawback and limitation. However, to the best of authors’ knowledge, in existing literature, there are no observers, which provide solution of the stated observation problem in the presence of noise even in UCO case, which slightly mitigates such mentioned drawback.
\begin{remark}
    Violation of Assumption 5 does not allow one to propagate the smallness of the parametric error $\tilde \theta \left( t \right)$ to the smallness of the estimation error $\tilde x\left( t \right)$. In this case, the error $\tilde x\left( t \right)$ explicitly depends from the measurement noise $w\left( t \right)$. Therefore, if Assumption 5 is not met, then the observer \eqref{42}, \eqref{43} ensures convergence of the parametric error $\tilde \theta \left( t \right)$ to the neighborhood of zero with an adjustable bound, while for the error $\tilde x\left( t \right)$ – to the neighborhood of zero, which cannot be reduced arbitrarily.
\end{remark}
\begin{remark}
    Note that, if the system initial condition ${\xi _0}$ lies on the manifold $\left\{ {\xi  \in {\mathbb{R}^n}{\rm{:\;}}{{\cal H}^{\top}}\xi  = {0_{n - m}}} \right\}$, then the estimator \eqref{23} also provides solution of the stated observation problem. 
\end{remark}
\section{Simulation}

Consider the system \eqref{36}, which is defined as follows:
\begin{equation*}
\begin{array}{c}
{\mathbf{A}}\left( {u{\rm{,\;}}t} \right){\rm{:}} = {\begin{bmatrix}
0&1\\
{ - {\omega ^2}}&0
\end{bmatrix}}{\rm{,\;}}{\mathbf{b}}\left( {u{\rm{,\;}}t} \right){\rm{:}} = {\begin{bmatrix}
0\\
{u\left( t \right)}
\end{bmatrix}}{\rm{,\;}}{\xi _0} = {\begin{bmatrix}
2\\
{ - 1}
\end{bmatrix}}{\rm{, }}\\
{\bf{C}}\left( {u{\rm{,\;}}t} \right) {\rm{:}} = {\begin{bmatrix}
1&0
\end{bmatrix}}{\rm{,\;}}u\left( t \right) = 1,{\rm{\;}}w\left( t \right) = 0.1{\rm{sin}}\left( {20t} \right){\rm{, }}
\end{array}
\end{equation*}
for which we have:
\begin{equation*}
{\hat \varphi ^{\top}}\left( t \right){\rm{:}} = {\begin{bmatrix}{{\rm{cos}}\left( {\omega t} \right)}&{{\textstyle{1 \over \omega }}{\rm{sin}}\left( {\omega t} \right)}
\end{bmatrix}}{\rm{,}}
\end{equation*}
and therefore, in case ${\cal L}_1^{\top} = {I_2}{\rm{,\;}}{\cal L}_2^{\top} = 0$ and $\omega  \ne 20$, it holds {uniformly in time $t$} that
\begin{equation*}
\begin{array}{l}
\mathop {{\rm{lim}}}\limits_{T \to \infty }\!\! {\textstyle{1 \over {{T}} }}\!{\begin{bmatrix}\!
{{\textstyle{{{\rm{sin}}\left( {2\omega \left( {T + t} \right)} \right) - {\rm{sin}}\left( {2\omega t} \right) + 2T\omega } \over {4\omega }}}}\!\!&\!\!{{\textstyle{{{\rm{cos}}\left( {2\omega t} \right)-{\rm{cos}}\left( {2\omega \left( {T + t} \right)} \right)} \over {4{\omega ^2}}}}}\\
{{\textstyle{{{\rm{cos}}\left( {2\omega t} \right)-{\rm{cos}}\left( {2\omega \left( {T + t} \right)} \right)} \over {4{\omega ^2}}}}}\!\!&\!\!{{\textstyle{{2T\omega  - {\rm{sin}}\left( {2\omega \left( {T + t} \right)} \right) + {\rm{sin}}\left( {2\omega t} \right)} \over {4{\omega ^3}}}}}\!
\end{bmatrix}}\\
 = {\begin{bmatrix}
{0.5}&0\\
0&{0.5{\omega ^{ - 2}}}
\end{bmatrix}}=\Lambda{\rm{,}}\\
\mathop {{\rm{lim}}}\limits_{T \to \infty } {\textstyle{1 \over {{T}} }}\int\limits_{t}^{t+T} {{\cal L}_1^{\top}\hat \varphi \left( \tau  \right)w\left( \tau  \right)d\tau }  = 0.
\end{array}
\end{equation*}

Then Assumptions 5-8 are met for the example under consideration. For better illustration of the properties of the proposed observer \eqref{42}, \eqref{43}, the measurement noise was defined as:
\begin{equation*}
w\left( t \right) = 0.1{\rm{sin}}\left( {20t} \right) + {\eta _y}\left( t \right){\rm{,}}
\end{equation*}
where ${\eta _y}\left( t \right)$ stands for a white noise generated in Simulink by ``Band Limited White Noise'' block with the parameters Noise Power = 0.1, Sample Time = 0.01, Seed = 23341.

For comparison purposes, we also implemented the G+D based GPEBO \cite{4} defined as follows:
\begin{equation}\label{46}
\hat x\left( t \right) = \xi \left( t \right) + {\Phi _A}\left( {t{\rm{,\;}}{t_0}} \right)\hat \theta \left( t \right){\rm{,}}
\end{equation}
\begin{equation}\label{47}
\begin{array}{l}
\dot {\hat \theta}\!\left( t \right)\!=\!- \gamma {\rm{adj}}\left\{ {{I_2}\!-\!\Phi\!\left( t \right)} \right\}\left( {\left( {{I_2}\!-\!\Phi\!\left( t \right)} \right)\hat \theta \!\left( t \right)\!-\!Y\!\left( t \right)} \right)\!{\rm{,}}\\
\dot Y\left( t \right) = \Gamma \hat \varphi \left( t \right)\left( {z\left( t \right) - {{\hat \varphi }^{\top}}\left( t \right)Y\left( t \right)} \right){\rm{,}}\\
\dot \Phi \left( t \right) =  - \Gamma \hat \varphi \left( t \right){{\hat \varphi }^{\top}}\left( t \right)\Phi \left( t \right){\rm{,}}
\end{array}
\end{equation}
where $\hat \theta \left( {{t_0}} \right) = {{\hat \theta }_0}{\rm{,}}$ $Y\left( {{t_0}} \right) = {0_2}{\rm{,}}$ $\Phi \left( {{t_0}} \right) = {I_2}$.
	
The parameters of the observers \eqref{42}, \eqref{43} and \eqref{46}, \eqref{47} were chosen as:
\begin{equation*}
\Gamma  = {I_2},\;T=36,\;\gamma  =  \begin{cases}
100,{\rm{\;for\;}}\eqref{43}\\
{\rm{1}}{\rm{,\;for\;}}\eqref{47}
\end{cases}
\end{equation*}

The transients for ${\rm{ln}}\left( {\left\| {\tilde x} \right\|} \right)$ and  ${\tilde \theta _i}\left( t \right)$  are presented respectively in Fig.1 and Fig.2.
\begin{figure}[!ht]
\begin{center}
\includegraphics[scale=0.51]{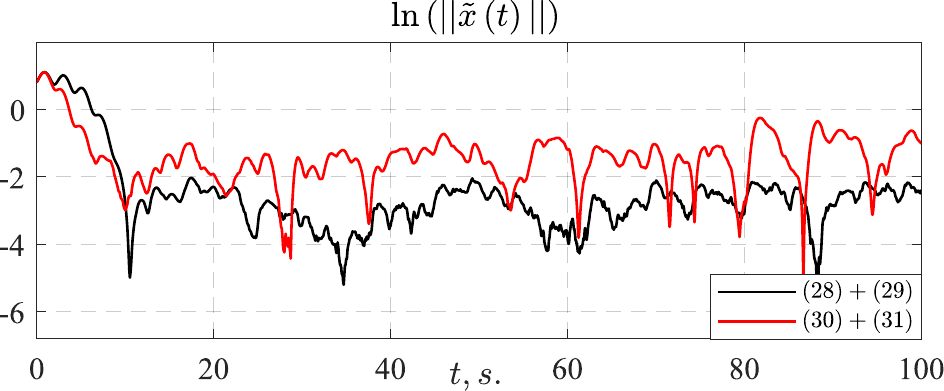}
\caption{Behavior of ${\rm{ln}}\left( {\left\| {\tilde x} \right\|} \right)$  for \eqref{42}+\eqref{43} and \eqref{46}+\eqref{47}.} 
\end{center}
\end{figure}

\begin{figure}[!ht]
\begin{center}
\includegraphics[scale=0.45]{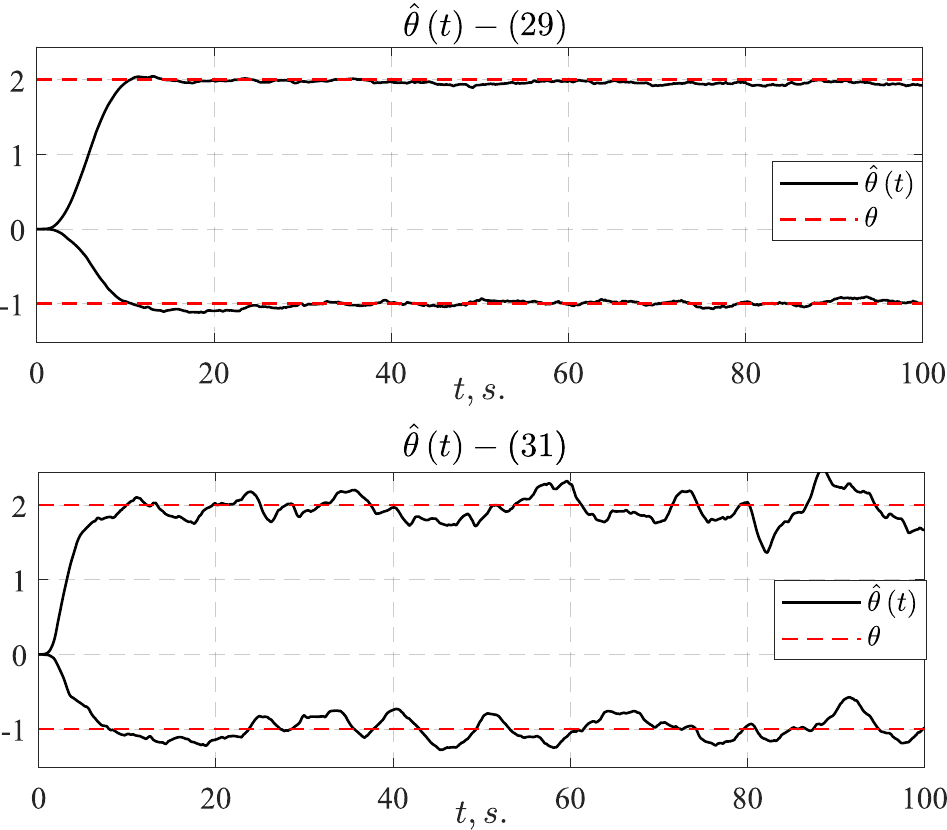}
\caption{Behavior of  ${\tilde \theta _i}\left( t \right)$ for \eqref{43} and \eqref{47}.} 
\end{center}
\end{figure}

The simulation results illustrate the theoretical conclusions of Theorems 1 and 5. In comparison with the observer \eqref{46}, \eqref{47}, the proposed one \eqref{42}, \eqref{43} ensures GES with a compact set with an {arbitrarily} small and adjustable bound.

\section{Conclusion and Discussion}
The paper proposes three new estimation laws \eqref{10}, \eqref{23}, \eqref{30}, each of which ensures exponential convergence of the parametric error to an arbitrarily small neighborhood of zero in case the regression equation is affected by an additive perturbation that can be caused by measurement noise (that corrupts regressor and regressand), as well as external perturbations. The law \eqref{30} has the weakest convergence conditions, which ensures convergence of the parametric error to an arbitrarily small neighborhood of zero in case more than a half of the regressor elements are independent (in the sense of Definition 3) from additive perturbation. 

In general, further research scope includes the following main directions:
\begin{enumerate}
\item[\textbf{FR1)}] application of the proposed estimation algorithms to solve adaptive control and state estimation problems, for example, as a part of composite adaptive laws \cite{16},
\item[\textbf{FR2)}] derivation of the necessary and sufficient conditions to meet the implication $\hat \varphi  \in {\rm{PE}} \Rightarrow \phi  \in {\rm{PE}}$ from Assumption 4,
\item[\textbf{FR3)}] improvement of the obtained results and relaxation of the adopted assumptions.
\end{enumerate}

\end{document}